\def\eqref#1{equation~\ref{#1}}
\def\1{\bm{1}}
\DeclareMathAlphabet{\mathsfit}{\encodingdefault}{\sfdefault}{m}{sl}
\SetMathAlphabet{\mathsfit}{bold}{\encodingdefault}{\sfdefault}{bx}{n}
\theoremstyle{plain}
\newtheorem{theorem}{Theorem}[section]
\theoremstyle{definition}
\newtheorem{definition}[theorem]{Definition}
\theoremstyle{remark}
\begin{document}
\begin{center}
    \textbf{\Large Efficient Multi Subject Visual Reconstruction from fMRI Using Aligned Representations}\\[1em]
    \rule{\textwidth}{0.4pt}\\[0.5em]
    \textbf{Christos Zangos\textsuperscript{*1} \quad Danish Ebadulla\textsuperscript{*1} \quad Thomas Sprague\textsuperscript{2} \quad Ambuj Singh\textsuperscript{1}}\\[3em]
    
\end{center}
\vspace{-0.2em} 
\begingroup

\makeatletter
\renewcommand\@makefntext[1]{\noindent#1} 
\makeatother

\footnotetext{
  \textsuperscript{*}Equal contribution
  \textsuperscript{1}Department of Computer Science, University of California, Santa Barbara, USA
  \textsuperscript{2}Department of Psychological and Brain Sciences, University of California, Santa Barbara, USA.
  Correspondence to: Christos Zangos  \texttt{<christoszangos@ucsb.edu>}.
}
\endgroup

\begin{abstract}
Reconstructing visual images from fMRI data is a central but highly challenging problem in neuroscience. Despite recent progress, current methods fall short when data and computation are limited—--precisely the conditions under which this task is most critical. We introduce a novel architecture-agnostic training paradigm to improve fMRI-based visual reconstruction through a subject-agnostic common representation space. We show that it is possible to leverage subject-specific lightweight modules to develop a representation space where different subjects not only lie in a shared space but are also aligned semantically. Our results demonstrate that such a training pipeline achieves significant performance gains in low-data scenarios. We supplement this method with a novel algorithm to select the most representative subset of images for a new subject. Using both techniques together, one can fine-tune with at most 40\% of the data while outperforming the baseline trained with the minimum standard dataset size. Our method generalizes across different training paradigms and architectures, producing state-of-the-art performance and demonstrating that a subject-agnostic aligned representation space is the next step towards efficient Brain-Computer Interfaces.

\end{abstract}

\section{Introduction}\label{sec:intro}

Over the past several decades, the use of machine learning techniques has enabled the decoding and/or reconstruction of information represented in neural activation patterns. These approaches offer a key step toward characterizing and quantifying cognition \citet{naselaris2011encoding}. Beyond establishing the presence of information in brain activity, another important goal is to compare decoding or reconstruction performance across experimental manipulations, such as cued visual attention~\citep{serences2007feature, kamitani2005decoding,scolari2012optimal,sprague2013attention, itthipuripat2019functional, sprague2018dissociable} and visual working memory~\citep{serences2009stimulus, harrison2009decoding, ester2013neural, sprague2014reconstructions, christophel2012decoding, li2021joint}. However, for these approaches to be practical in a cognitive neuroscience laboratory, it is typically necessary to acquire large amounts of data.




Recent efforts have begun to address the challenges associated with decoding and reconstructing brain activity. 
Early approaches, including self-supervised learning frameworks ~\citep{beliy2019voxels,gaziv2022self,mozafari2020reconstructing, shen2019end,seeliger2018generative,ren2021reconstructing} struggled to generate semantically accurate image reconstructions, often resulting in images with limited resemblance to the original stimuli. Significant advancements were achieved by incorporating multimodal data, such as text \citep{lin2022mind}, and then by leveraging the generative power of latent diffusion models (LDMs)~\citep{takagi2023high, scotti2024reconstructing, lu2023minddiffuser,xia2024dream}. A more detailed literature review is presented in Appendix \ref{app:related_work}. However, these state-of-the-art methods require extremely large datasets with dozens of hours of fMRI data per participant. Thus, there is a clear need to maximize the efficiency and generalizability of image reconstruction methods so that a model for a new participant can be trained using as little data as possible.

Recent works by \citet{scotti2024mindeye2} and \citet{liu2024mindslearningtransferableneural} have attempted to improve generalization by projecting fMRI signals into a shared representation space across subjects. However, in these pipelines the shared space is treated only as a convergence point of subject-specific pathways and is not explicitly aligned. We show in our analysis that though subject representations in this space share a similar structure, manipulations such as a simple orthogonal transformation substantially increase cross-subject similarity. This motivates an explicit alignment strategy: instead of relying on the shared space to emerge implicitly, lightweight subject-specific neural network modules called adapters \citep{liu2024mindslearningtransferableneural} can be trained to align each subject to a reference subject. 

We propose Adapter Alignment (AA), an efficient and generalizable approach to fMRI-based visual reconstruction using a subject-agnostic representation space. Instead of fine-tuning new subjects end-to-end, we align new subjects to a pre-trained reference subject using adapter modules. AA substantially reduces data and training time while improving reconstruction quality. We also present a data-efficient image selection strategy that identifies the most representative training samples by covering key dimensions of variance in the common space. This method reduces the burden of data collection and further complements Adapter Alignment by making fine-tuning more efficient. Our contributions are as follows:

\begin{itemize}
    \item We demonstrate that fMRI signals from different subjects exhibit structural similarity in a shared representation space. While orthogonal transformations improve alignment, near-perfect cross-subject alignment can be attained using adapter modules (Section~\ref{sec:common_visual}).
    \item We introduce Adapter Alignment (AA), a novel training paradigm where new subjects are explicitly aligned to a reference subject. AA achieves faster convergence and better performance, especially in low-data regimes (Sections \ref{sec:adaptalign}, \ref{sec:complete_data_tests} and \ref{sec:limited_tests}).
    \item We propose a submodular greedy algorithm to identify representative training images, providing a data-efficient alternative to random sampling. This strategy consistently improves fine-tuning performance across subjects and complements Adapter Alignment by reducing the amount of data required. (Sections~\ref{sec:imageselection} and \ref{sec:limited_tests})
    \item We validate AA on both the NSD and THINGS ~\citep{hebart2023things} datasets, showing generalization across different architectures (Section \ref{sec:diff_archs}), data acquisition protocols (7T vs. 3T) (Section \ref{sec:things_dataset}), and subject pools (Appendix \ref{app:diff_refsub}). 
\end{itemize}
\section{Related Work}\label{app:related_work}

Recent works have achieved impressive results by mapping fMRI data to latent diffusion model (LDM) spaces ~\citep{takagi2023high,scotti2024reconstructing,lu2023minddiffuser,xia2024dream}, while simultaneously integrating multiple modalities. Despite this progress, these methods have not been thoroughly tested for their generalization performance across a larger population of subjects. In other words, constructing a model for a  new subject still requires retraining the model from scratch to ensure optimal performance. 

Approaches like \citet{ferrante2024through} have demonstrated cross-subject brain decoding using limited data samples by aligning multiple subjects. Although the authors, like us, leveraged visual stimuli commonly used for multiple subjects, the fidelity of their reconstructed images lags behind current advancements. \citet{scotti2024mindeye2}, produced significant results in image reconstruction under limited data conditions but did not fully exploit the properties of the common representation space. In contrast, our novel approach enables faster convergence and improved performance in a limited data setting by effectively leveraging the structure of the common space.

\citet{qian2023fmri} refers to a 'common space'; however, their representation is primarily anatomical and only implicitly semantic.
\citet{thual2022aligning} focuses on aligning the entire cortical surfaces of different individuals in a way that preserves both anatomical structure and functional activation patterns. However many downstream tasks like mapping fMRI signals to semantic embeddings for image reconstruction benefit from representations that capture high-level semantic content. Therefore the suggested approach doesn’t necessarily capture abstract, semantic features (such as object categories, scenes, or other high-level concepts) that are crucial for decoding complex stimuli.
\citet{thual2023aligning} is an extension of \citet{thual2022aligning}. They apply functional alignment on top of the anatomical alignment. The method is very efficient even in limited data settings but it’s less integrated, meaning that it appears more as a preprocessing step. 
Piecewise shared response modeling \cite{bazeille2021empirical} provides significant performance in terms of aligning neural data from different subjects. However, the independent alignment of segments might not integrate seamlessly into a coherent global representation.
Similarly to MindEye2, but in the context of M/EEG data, \citet{li2024visual} manage to reconstruct visual stimuli using cross subject decoding. They tackle inter-subject variability primarily by incorporating subject-specific tokens into their M/EEG encoder.

\section{A Common Visual Representation Space}\label{sec:common_visual}

At the core of our approach is the hypothesis that each individual’s brain (or at least the tissue involved in processing static visual images) can be viewed as an instance of a shared 'common brain', with visual activity patterns that vary along similar dimensions in a shared representational space. This idea is supported by prior neuroscience findings, including consistent category-selective tissue responses (e.g., faces, scenes) and continuous dimensions of high-level visual selectivity in ventral temporal cortex such as animate/inanimate, large/small (real-world size), and spiky/round \citep{huth2012continuous,long2018mid,bao2020map}. Retinotopic organization provides another example, where adaptive models transform individual anatomical mappings from retinal to cortical coordinates into a shared reference space \citep{sprague2013attention,wandell2015computational}. In our work, we analyze embeddings from multiple subjects and provide evidence that although fMRI signals do not naturally align in a shared space, they exhibit a similar underlying structure. We further explore structural and semantic properties of this space through eigenvector analysis, visual embedding comparisons, and alignment experiments, with additional results of semantic categories (e.g. animate / inanimate) provided in Appendix~\ref{app:common_concepts}.

Existing works, such as MindEye2~\citep{scotti2024mindeye2}, employ simple linear adapters to project subject-specific fMRI signals into a common representation space. A key indication of the existence of such a space is precisely this structural similarity across subjects. To investigate it, we use a pre-trained MindEye2 model to extract common-space embeddings for shared images across multiple subjects. Initial cross-subject cosine similarity is low, but as shown in Figure \ref{fig:cosine_evidence}, applying an orthogonal transformation with Procrustes analysis~\citep{schonemann1966generalized} reveals that the embeddings can be structurally aligned. However, this alignment is still incomplete, reflecting both the limitations of orthogonal transformations and the absence of an explicit alignment objective.

\begin{figure}[ht]
\centering
\begin{subfigure}[t]{0.23\columnwidth}
    \centering
    \includegraphics[width=\columnwidth]{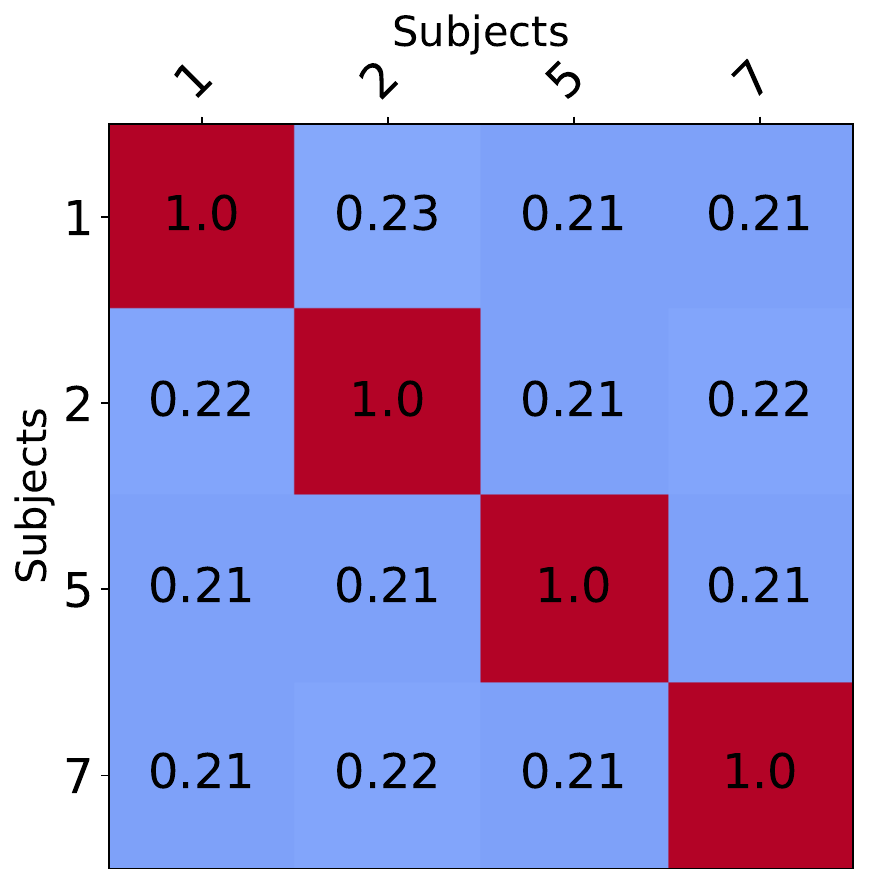}
    \caption{Original alignment.}
\end{subfigure}
\hfill
\begin{subfigure}[t]{0.23\columnwidth}
    \centering
    \includegraphics[width=\columnwidth]{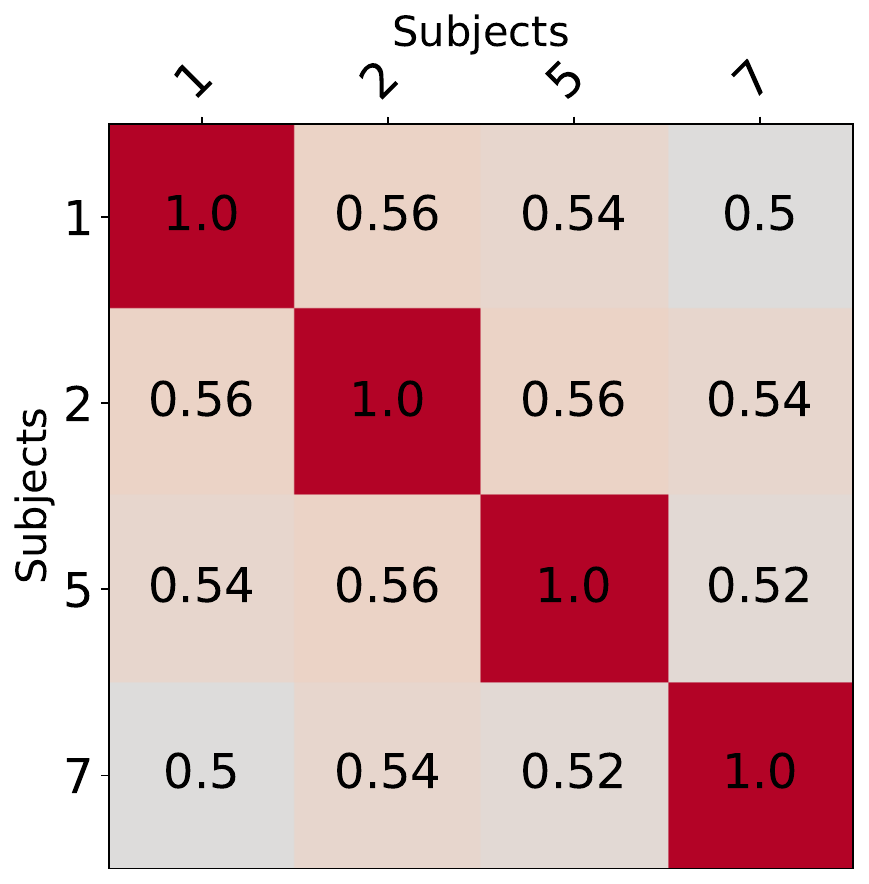}
    \caption{Post orthogonal transformation.}
\end{subfigure}
\hfill
\begin{subfigure}[t]{0.23\columnwidth}
    \centering
    \includegraphics[width=\columnwidth]{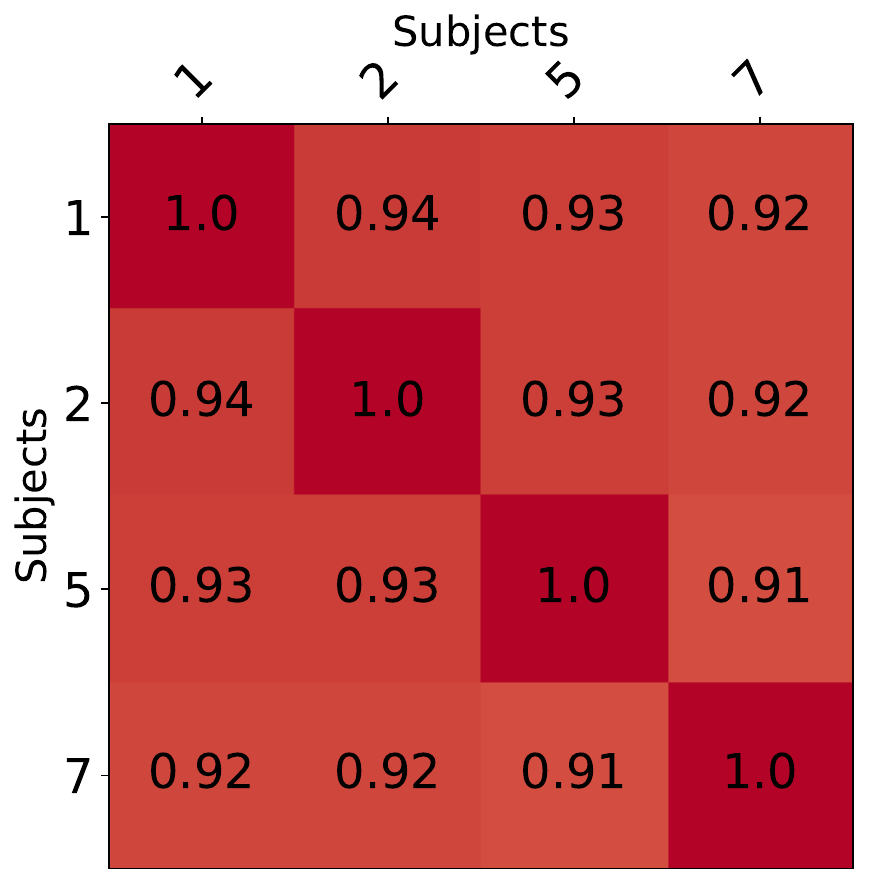}
    \caption{Post Adapter Alignment.}
\end{subfigure}
\caption{Cross-subject cosine similarity of shared image embeddings in the common space extracted from a pre-trained MindEye2 model. (a) Alignment between subjects is weak before any transformation. (b) Applying an orthogonal transformation substantially improves cross-subject similarity. (c) Explicit alignment through lightweight adapter modules achieves near-perfect alignment. Only Subjects 1, 2, 5, and 7 are shown for consistency throughout the manuscript.}
\label{fig:cosine_evidence}
\end{figure}

As a preliminary experiment, we pre-trained a single-subject model and explicitly mapped subsequent subjects to this reference subject using lightweight subject-specific adapters. Even with a simple adapter design (a linear layer with a GELU), this approach yielded near-perfect alignment in the common space (Figure \ref{fig:cosine_evidence} (c)). The choice of adapter is not critical. Any transformation that goes beyond an orthogonal mapping can substantially improve alignment highlighting that it is the explicit alignment objective, rather than architectural complexity, that drives the improvement.

The above evidence of structurally well-aligned representation spaces across subjects forms the core hypothesis of our work. In the remainder of this paper, we develop a training paradigm that explicitly enforces such alignment and evaluate it empirically against the commonly used baseline fine-tuning approach. Additional analysis is provided in Appendix~\ref{app:common_space}.



\section{Methods}\label{sec:reconstruction}
In the previous section, we established the existence of a shared representation space across subjects. We now describe the methods used to leverage this space for fMRI-to-image reconstruction. Our framework introduces Adapter Alignment (AA) as a lightweight strategy for subject-specific alignment, integrates this into a reconstruction pipeline, and augments it with a greedy image selection algorithm to further improve data efficiency. Together, these components form the basis for the experiments presented in the following section.

\subsection{Dataset}\label{sec:datasets}
We use the Natural Scenes Dataset (NSD) \citep{allen2022massive} but apply a different split compared to the traditional setup. Typically, common images across subjects are assigned to the test set, while the training set consists of unique images viewed by each subject. However, for effective Adapter Alignment, it is critical to have a one-to-one mapping of images across subjects. To achieve this, we incorporate the common images into the training set and swap an equal amount of training images to the test sets.
This means that every subject now has a unique set of images on which they are tested. The test split is chosen to roughly ensure an equal distribution across different categories in the dataset. We present the performance of the original training method and the proposed technique on the new split to ensure a fair comparison across subjects and alignment techniques. Three splits are generated in this manner and whenever results are presented on the new split, it is an average of the performance on all 3 splits. 

Our initial experiments revealed that models performed slightly worse on the unique image test set than on the common image test set, suggesting that the unique image test set presents a more challenging benchmark. Similar to prior work, we focus on Subjects 1, 2, 5, and 7, as they have complete 40-hour fMRI datasets available in NSD, enabling better reconstruction performance. These subjects provide robust data for training and evaluation, allowing us to validate our alignment and reconstruction methods comprehensively.

\begin{figure*}[ht]
    \centering
    \includegraphics[width=0.7\textwidth]{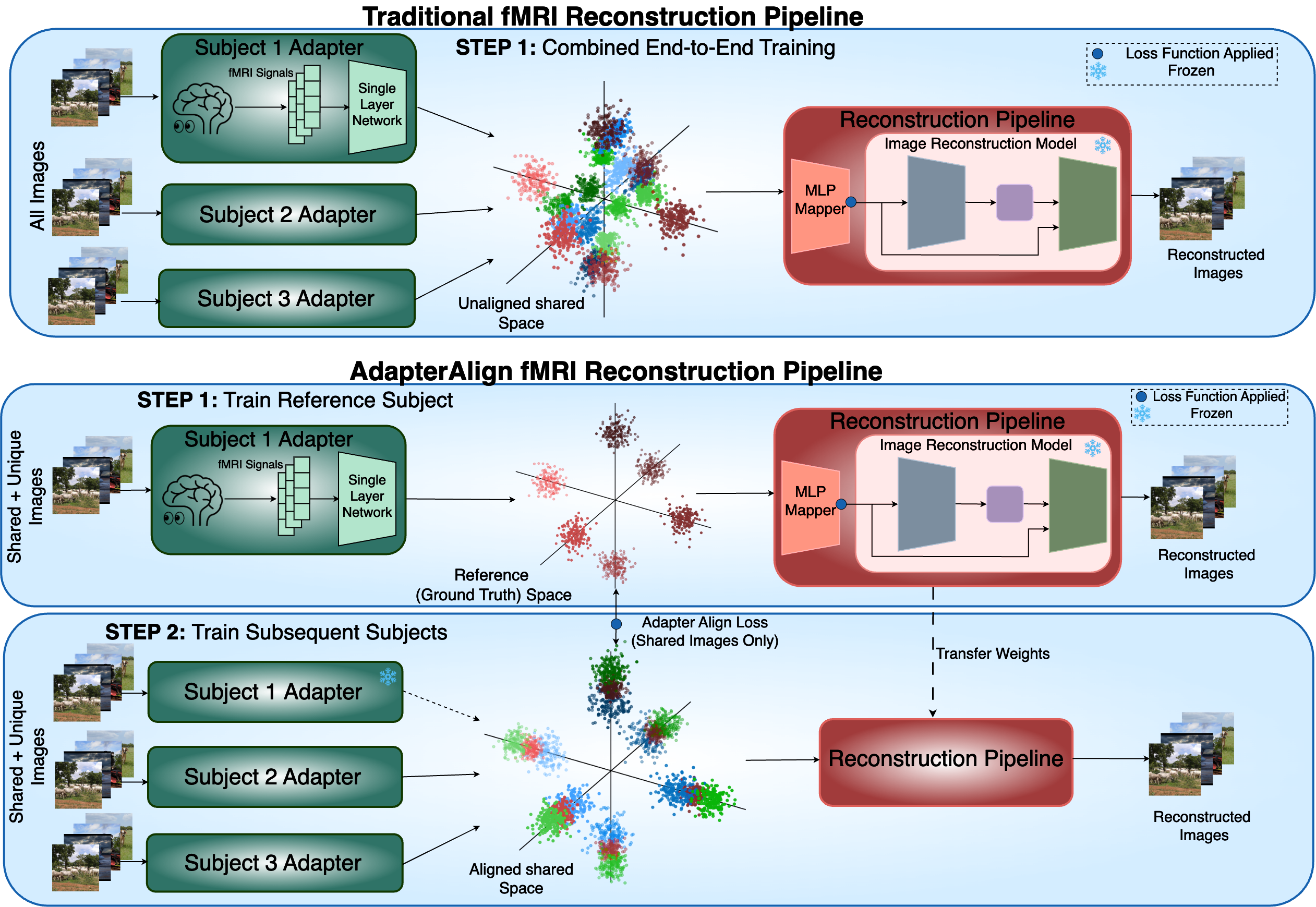}
    \caption{Training procedures for fMRI-Image Reconstruction. \textbf{Top:} Traditional Approach to reconstruction. All subjects are trained simultaneously and loss functions are applied at the MLP Mapper to align the output to pre-trained CLIP space. \textbf{Bottom:} Proposed AdapterAlign training pipeline. First, we train a single reference subject and obtain its embeddings in the shared space. Then we train subsequent subjects to minimize loss at both the shared space and the mapper. The same mapper weights are transferred over in step 2.}
    \label{fig:adapt_align}
\end{figure*}

\subsection{Adapter Alignment}\label{sec:adaptalign}

Current approaches for training multi-subject pipelines with a common representation space typically follow an end-to-end strategy, where fMRI inputs are processed by the model and the loss is computed at the final layer to minimize the discrepancy between the ground-truth and predicted embeddings. Throughout the paper, we refer to this fine-tuning strategy as the baseline. In this setup, the common space emerges organically during training without explicit constraints. However, this approach has two key drawbacks: (1) it requires extended training to form a satisfactory common space and achieve convergence, and (2) the subject-specific embeddings in the common space are not strongly aligned, making it harder for the residual MLP to map multiple subject inputs to a unified output (Figure \ref{fig:adapt_align}).

To address these challenges, we propose Adapter Alignment (AA) training. In AA training, the pipeline is first pre-trained end-to-end on a single reference subject, chosen based on reconstruction performance and data availability. Among NSD subjects, Subject 1 consistently exhibited the best reconstruction performance, making it the ideal candidate to construct the reference space. After pre-training, the output embeddings for the subject-specific adapter, particularly those corresponding to the shared images, are extracted.

When fine-tuning a new subject, we adopt a two-step process. The first is \textbf{Adapter Alignment}, where we train the new subject’s adapter using the shared images, with the objective of aligning its output to the embeddings of Subject 1 for the same images. The adapter is allowed to overfit on these shared images until the loss is minimized. The second step is \textbf{End-to-End Fine-tuning}, where we unfreeze the entire pipeline and resume training, applying the usual losses at the final output. Additionally, we introduce an MSE loss at the adapter level specifically for the common images, which allows the adapter to restructure itself to account for subject-specific discrepancies and adapt to the unique images encountered by the new subject. We refer to this process of overfitting the adapter with AA and fine-tuning end-to-end "AAMax". From our preliminary experiments, we found that AAMax outperforms other methods of aligning adapters like the ones proposed by \citet{ferrante2024through} and \citet{wang2024mindbridge} Both methods do not fine-tune the MLP and the diffusion prior end-to-end after aligning the adapters which reduces the generalization ability of the fine-tuned models. We compare the performance between these training methods in Appendix \ref{app:train_strat}. We find that AAMax is the best method for fine-tuning new subjects in limited data settings.

This training method is effective across both high-data and low-data regimes. With full training data, AAMax achieves performance comparable to traditional end-to-end training while converging more quickly. In limited data settings, AAMax shows a clear advantage, producing higher-quality reconstructions and greater accuracy with substantially fewer training samples. We outline our method in Figure~\ref{fig:adapt_align} and detail each component of the pipeline in the following section.

\subsection{Reconstruction Pipeline}\label{sec:pipeline}

In general, fMRI-to-image reconstruction pipelines follow a common structure: brain signals are mapped into an intermediate embedding space (semantic or virtual), which is then decoded into an image using a pre-trained generative model such as a diffusion decoder. We discuss multiple previous approaches and their pipelines in Appendix \ref{app:related_work}. Since Adapter Alignment (AA) is agnostic to architectural choices, we adopt the pipeline used in MindEye1 for our main experiments and refer readers to MindEye1 \citep{scotti2024reconstructing} and MindEye2 \citep{scotti2024mindeye2} for architectural details. We also validate our method on MindEye2's architecture (Section \ref{sec:diff_archs} and Appendix \ref{app:me2_diffsub}) and across varying component choices and shared space dimensionalities (Appendix \ref{app:scale}), showing consistent improvements and compatibility.

In summary, flattened fMRI voxels from the ``nsdgeneral'' region are passed through a subject-specific adapter (a single linear layer with GELU non-linearity) into a shared representation space. This shared representation is then processed by a residual MLP. The resulting output is mapped into the CLIP \citep{radford2021learning} space via a pre-trained diffusion prior \citep{ramesh2022hierarchical}. This serves as a conditioning signal for a diffusion model to produce the final reconstruction. 

We detail the training losses used in this pipeline for both the high-level (semantic) and low-level (visual) branches below, with full equations provided in Appendices \ref{app:hlp_breakdown} and \ref{app:llp_breakdown}. The overall loss function for the high-level pipeline is:

\[
\mathcal{L}_{\text{new}} = 
\begin{cases} 
\lambda_1 \mathcal{L}_{\text{Prior}} + \lambda_2 \mathcal{L}_{\text{CLIP}}, & \text{for unique images} \\
\lambda_1 \mathcal{L}_{\text{Prior}} + \lambda_2 \mathcal{L}_{\text{CLIP}} 
+ \lambda_3 \mathcal{L}_{\text{MSE}}(\mathbf{z}_{\text{new}}, \mathbf{z}_{\text{ref}}), & \text{for common images}
\end{cases}
\]

where \(\mathbf{z}_{\text{adapter, new}}\) and \(\mathbf{z}_{\text{adapter, ref}}\) are the adapter-level embeddings for the new and reference subjects, respectively, on the shared common images. 

The low-level pipeline also follows a similar structure to the high-level pipeline but maps the fMRI data to latent embeddings derived from a custom autoencoder. This autoencoder is trained on downscaled images from the NSD dataset (64x64x3) and reduces them to a latent space of size 1024. The low-level pipeline learns to map to this latent space; passing these embeddings through a pre-trained decoder produces blurry reconstructions. The loss function for this stage is:
\[
\mathcal{L}_{\text{new}} = 
\begin{cases} 
\mathcal{L}_{\text{MSE}}(\mathbf{z}_{\text{fMRI, new}}, \mathbf{z}_{\text{AE}}), & \text{for unique images} \\
\mathcal{L}_{\text{MSE}}(\mathbf{z}_{\text{fMRI, new}}, \mathbf{z}_{\text{AE}}) 
+ \mathcal{L}_{\text{MSE}}(\mathbf{z}_{\text{adapt, new}}, \mathbf{z}_{\text{adapt, ref}}), & \text{for common images}
\end{cases}
\]
where \(\mathbf{z}_{\text{fMRI, new}}\) is the output embedding for the new subject, \(\mathbf{z}_{\text{AE}}\) is the AE latent space representation of the image and \(\mathbf{z}_{\text{adapter, new}}\) and \(\mathbf{z}_{\text{adapter, ref}}\) are the adapter-level embeddings for the new and reference subjects, respectively. 

After mapping brain signals to the CLIP space, the mappings are fed as conditioning to a frozen Diffusion Model \citep{xu2023versatile} to generate final reconstructions.


\subsection{Greedy Algorithm for Best Candidate Image Selection}\label{sec:imageselection}
In the context of fMRI-based visual reconstruction, we introduce a greedy algorithm that selects the most representative subset of images from the common space. We first decompose (SVD) the adapter’s weight matrix of the reference subject. Then we project the embeddings of the shared set of 1000 images onto the $d$ principal dimensions corresponding to the principal singular vectors. 
For each dimension $j$, we partition the range into \( B_j \) bins that is determined by the ratio of singular value \( \lambda_j \) to the largest singular value \( \lambda_1 \):
\[
B_j = \left\lfloor w \cdot \frac{\lambda_j}{\lambda_1} \right\rfloor, \quad j = 1, 2, \dots, d
\]
where \( w \) is a predefined scaling parameter. Dimensions with larger singular values are partitioned into more bins, reflecting their greater variance and importance in capturing meaningful data features. 

Each image (and its fMRI) maps to exactly 1 bin in each dimension and our goal is to find the smallest subset of images that covers all bins in each dimension. This subset is critical in selecting the most representative images for a new subject, ensuring that the selected images capture the key activity patterns that generalize across subjects. The problem as stated is NP-hard. However, it is submodular and a greedy heuristic such as given below achieves an $(1-1/e)$ approximation ratio.

Given a subset $S$ of images, let $Gap(S,j)$ be the number of empty bins in dimension $j$, and let $Gap(S) = \Sigma_j Gap(S,j)$ be the total number of images across all dimensions. 
At every step, the greedy algorithm chooses image $i \not \in S$ that minimizes $Gap$ when added to $S$. 

After applying the image selection algorithm, the chosen subset simply replaces random sampling for fine-tuning. The training procedure itself is unchanged: the selected images can be used with either baseline fine-tuning or AAMax. In this way, the algorithm serves purely as a strategy to identify the most representative subset of images, reducing the data requirement without altering the reconstruction pipeline. Formal proofs for the algorithm are presented in Appendix \ref{app:nphardness}.

\section{Results}
We now evaluate the performance of our proposed framework across a range of experimental conditions. We begin with complete data experiments, where AAMax is compared against baseline training. We then turn to limited data settings, where we report results from both AAMax and AAMax combined with image selection in a unified comparison. We also analyze how AAMax scales with increasing amounts of training data, highlighting that its advantage diminishes as larger datasets allow even unaligned models to generalize. Finally, we present experiments showing that AAMax generalizes across architectures using MindEye1 and MindEye2 \citep{scotti2024mindeye2} architectures.

\subsection{Tests on Complete Data} \label{sec:complete_data_tests}
We first validate our training method using the full dataset. Subject 1 is used as the reference subject, while fine-tuning is performed on Subjects 2, 5, and 7. The results averaged over all 3 subjects are presented in Table \ref{tab:fulldatacomparison}. AA is agnostic to the choice of reference subject but we found that having a good reference subject improves fine-tuning performance (Appendix \ref{app:diff_refsub}). We use the same set of metrics as \citet{scotti2024mindeye2} to evaluate our reconstructions. Pixel Correlation, SSIM and 2-way percent correct for the 2nd and 5th layer of AlexNet are considered low level metrics. 2-way percent correct for Inception\citep{szegedy2016rethinking}, CLIP, EfficientNet\citep{tan2019efficientnet} and SwAV\citep{caron2020unsupervised} score are considered high level metrics. When trained to completion, both baseline fine-tuning and AAMax show similar overall performance. However, the MSE loss at the adapter level is significantly lower (by an order of magnitude as shown in Appendix \ref{sec:mse}) when using adapter alignment. This allows for more precise post-hoc analysis after extracting embeddings from the common space.

\begin{table}[h]

    \centering
    \caption{Quantitative results on training the reconstruction models on all 40 hours of data. The fine-tuning results are averaged over subjects 2,5 and 7. AAMax and baseline fine-tuning perform similarly across all metrics.}
    \vskip 0.1in
    \setlength{\tabcolsep}{2pt}  
    \renewcommand{\arraystretch}{1.1}  
    \footnotesize  

    \resizebox{0.8\columnwidth}{!}{  
    
    \begin{tabular}{lccccccccc}
        \toprule
        & \multicolumn{4}{c}{\textbf{Low-Level}} & \multicolumn{4}{c}{\textbf{High-Level}} \\
        \cmidrule(lr){2-5} \cmidrule(lr){6-9}
        \textbf{Method} & PixCorr $\uparrow$ & SSIM $\uparrow$ & Alex(2) $\uparrow$ & Alex(5) $\uparrow$ & Incep $\uparrow$ & CLIP $\uparrow$ & Eff $\downarrow$ & SwAV $\downarrow$\\
        \midrule
        
        Subject 1 Pre-train & 0.345 & 0.346 & 92.80\% & 96.88\% & 94.40\% & 90.02\% & 0.692 & 0.399 \\ 
        \cline{2-9}
        Baseline Fine-Tune AVG & 0.258 & \textbf{0.339} & \textbf{89.64\%} & 95.05\% & 92.63\% & \textbf{89.64\%} & 0.717 & 0.427 \\  
        AAMax Fine-Tune AVG & \textbf{0.259} & 0.337 & 89.20\% & \textbf{95.09\%} & \textbf{92.68\%} & 89.49\% & \textbf{0.713} & \textbf{0.423} \\  
        \bottomrule
    \end{tabular}
    }
    
    \label{tab:fulldatacomparison}
\end{table}

\subsection{Tests on Limited Data}\label{sec:limited_tests}
A key objective of multi-subject pipelines is to achieve accurate models of new subjects with limited data. Because acquiring fMRI data is costly, reducing the data requirements makes reconstruction pipelines more practical and broadly applicable.

In our experiments, we designate Subject 1 as the reference and fine-tune Subjects 2, 5, and 7 using the shared images. For consistency, we use the same reference subject checkpoint across all experiments and report averaged results over three runs with fixed random splits. Table~\ref{tab:main_limiteddata_table} presents the unified results for fine-tuning with 1 hour of data (250 images), comparing baseline training and AAMax with both random sampling and image selection (IS). Across all subjects, AAMax significantly outperforms the baseline, while IS consistently improves performance regardless of the fine-tuning strategy. As expected, the strongest results are obtained when combining AAMax with IS, showing that the two approaches are complementary. For IS, we select images based on the top 20 eigenvector dimensions, with ablations on this choice provided in Appendix~\ref{app:increase_evdim}. We run experiments with scaling the amount of training data in the following section and additional ablations with 2 and 4 hours of data are included in Appendix~\ref{app:more_limited}, where the same trend holds. We also run experiments on convergence in Appendix \ref{app:more_limited} and find that AA followed by just 1 epoch of End-to-End fine-tuning is good enough to outperform the baseline.

We further investigate the efficiency of our approach by testing how far data requirements can be reduced while maintaining reconstruction performance. In this setup, Subjects 2 and 5 were fine-tuned from Subject 1 using 100, 150, and 200 training images. At 100 images, both AAMax and AAMax+IS already match the baseline trained with 1 hour of data, and at 150 images they clearly surpass it. As shown in Figure~\ref{fig:low_data_push}, AAMax+IS consistently achieves the strongest results, though even AAMax alone outperforms the baseline in these low-data regimes. These findings highlight the robustness of adapter alignment and demonstrate that effective reconstructions can be achieved with as little as 40\% of the data previously considered necessary.


\begin{table}[h]
    \centering
    \caption{Limited data fine-tuning results (1 hour, 250 images) using Subject 1 as the reference and Subjects 2, 5, and 7 as fine-tuning subjects. Each subject is evaluated under four settings: Baseline with random sampling, AAMax with random sampling, Baseline with image selection (IS), and AAMax with IS. Results are averaged over three runs with fixed random splits. AAMax consistently outperforms the baseline, IS provides additional gains in both training regimes, and the best performance is achieved by combining AAMax with IS. Pixel-level metrics are shown in gray since they are not optimized in these runs.}

    \vskip 0.1in
    \setlength{\tabcolsep}{2pt}  
    \renewcommand{\arraystretch}{1.1}  
    \footnotesize  
    \resizebox{0.8\columnwidth}{!}{  
    \begin{tabular}{lccccccccc}
        \toprule
        & \multicolumn{4}{c}{\textbf{Low-Level}} & \multicolumn{4}{c}{\textbf{High-Level}} \\
        \cmidrule(lr){2-5} \cmidrule(lr){6-9}
        \textbf{Method} & PixCorr $\uparrow$ & SSIM $\uparrow$ & Alex(2) $\uparrow$ & Alex(5) $\uparrow$ & Incep $\uparrow$ & CLIP $\uparrow$ & Eff $\downarrow$ & SwAV $\downarrow$\\
        \midrule
        Subject 2 FT 1HR Baseline      & \color{gray}{0.078} & \color{gray}{0.264} & 75.33\% & 84.18\% & 74.15\% & 73.66\% & 0.862 & 0.524 \\
        Subject 2 FT Selected 1HR Baseline & \color{gray}{0.097} & \color{gray}{0.288} &  78.78\% &  85.78\% & 77.13\% & 76.17\% & 0.842 & 0.513 \\
        Subject 2 FT 1HR AAMax  & \color{gray}{0.112} & \color{gray}{0.259} &  78.75\% &  88.04\% & 78.99\% & 78.28\% & 0.824 & 0.513 \\
        Subject 2 FT Selected 1HR AAMax  & \color{gray}{0.102} & \color{gray}{0.267} &  \textbf{81.41\%} &  \textbf{89.55\%} & \textbf{80.43\%} & \textbf{80.33\%} & \textbf{0.823} & \textbf{0.496} \\
        \hline
        Subject 5 FT 1HR Baseline       & \color{gray}{0.082} & \color{gray}{0.267} &  75.05\% &  83.28\% & 76.04\% & 76.40\% & 0.855 & 0.526 \\
        Subject 5 FT Selected 1HR Baseline & \color{gray}{0.098} & \color{gray}{0.286} &  77.07\% &  86.31\% & 78.09\% & 79.55\% & 0.827 & 0.499 \\
        Subject 5 FT 1HR AAMax  & \color{gray}{0.105} & \color{gray}{0.278} &  79.90\% &  88.68\% & 80.63\% & 82.20\% & 0.808 & 0.493 \\
        Subject 5 FT Selected 1HR AAMax  & \color{gray}{0.117} & \color{gray}{0.285} &  \textbf{81.05\%} &  \textbf{89.88\%} & \textbf{81.92\%} & \textbf{82.47\%} & \textbf{0.799} & \textbf{0.480} \\
        \hline
        Subject 7 FT 1HR Baseline   & \color{gray}{0.079} & \color{gray}{0.252} &  72.98\% &  79.98\% & 71.57\% & 72.07\% & 0.871 & 0.539 \\
        Subject 7 FT Selected 1HR Baseline  & \color{gray}{0.099} & \color{gray}{0.286} &  78.41\% &  84.77\% & 74.91\% & 75.79\% & 0.844 & 0.515 \\
        Subject 7 FT 1HR AAMax  & \color{gray}{0.104} & \color{gray}{0.285} &  80.04\% &  87.47\% & 79.17\% & 78.93\% & 0.826 & 0.509 \\
        Subject 7 FT Selected 1HR AAMax  & \color{gray}{0.123} & \color{gray}{0.298} &  \textbf{81.24\%} &  \textbf{88.34\%} & \textbf{79.94\%} & \textbf{79.81\%} & \textbf{0.820} & \textbf{0.495} \\
        \bottomrule
    \end{tabular}
    }
    \label{tab:main_limiteddata_table}
\end{table}

\begin{figure}[!h]
    \centering
    \begin{subfigure}{0.35\columnwidth}
        \centering
        \includegraphics[width=\columnwidth]{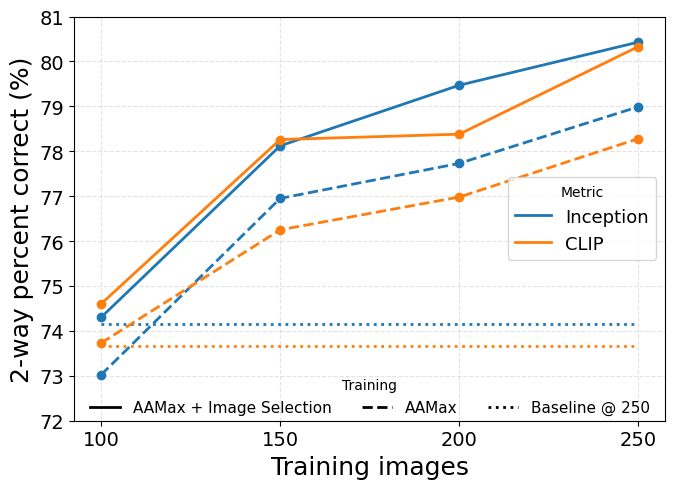}
    \end{subfigure}
    \begin{subfigure}{0.35\columnwidth}
        \centering
        \includegraphics[width=\columnwidth]{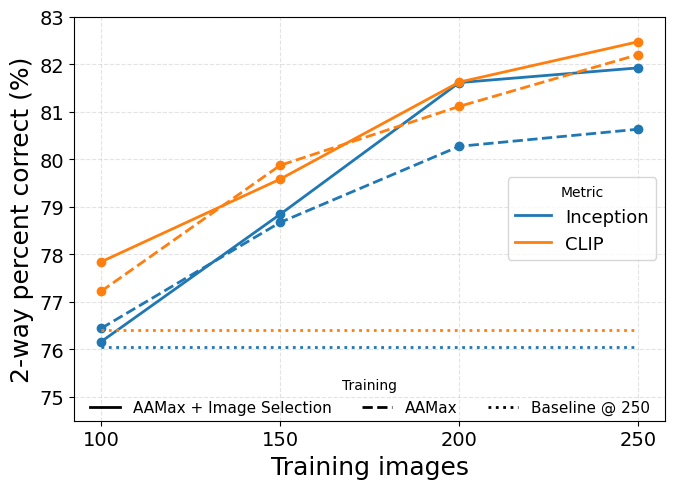}
    \end{subfigure}\hfill
    \caption{Pushing the limits of data efficiency with AAMax+Image Selection. Left: Subject 2 fine-tuned from Subject 1. Right: Subject 5 fine-tuned from Subject 1. In both cases, AAMax and AAMax+Image Selection outperform the baseline trained with 250 images, even when using only 150 images (a 40\% reduction) and match performance with 100 images (a 60\% reduction).}
    \label{fig:low_data_push}
\end{figure}



\subsection{Scaling with Training Data}

We further examine how the benefits of Adapter Alignment scale with training data size. Using Subject 2 as the fine-tuning subject, we progressively increase the training set from 250 to 500, 1000, and 4000 images, before extending to the full dataset. Up to 1000 images, only common images are used for alignment and fine-tuning; for 4000 images, alignment is still performed on the 1000 shared images, while the additional 3000 images are used only for fine-tuning. As shown in Figure~\ref{fig:limited_scaling}, AAMax achieves substantially higher performance than baseline fine-tuning in low-data regimes, with the gap narrowing as training data increases. At full data, both approaches converge, consistent with prior multi-subject pipelines such as MindEye2, where large amounts of subject-specific data allow the model to generalize even without explicit alignment.

\begin{figure*}[h]
    \centering
    \begin{subfigure}{0.32\textwidth}
        \centering
        \includegraphics[width=\columnwidth]{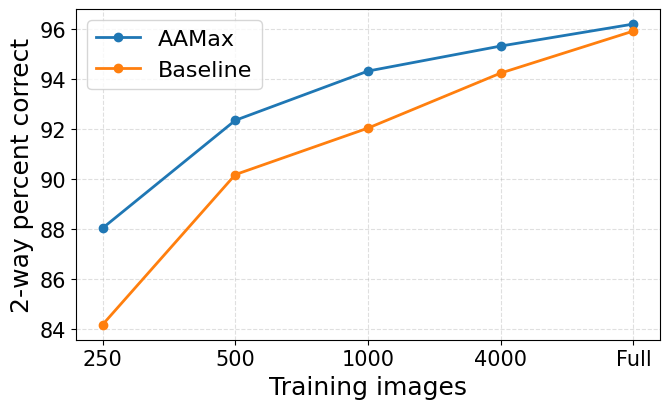}
    \end{subfigure}\hfill
    \begin{subfigure}{0.32\textwidth}
        \centering
        \includegraphics[width=\columnwidth]{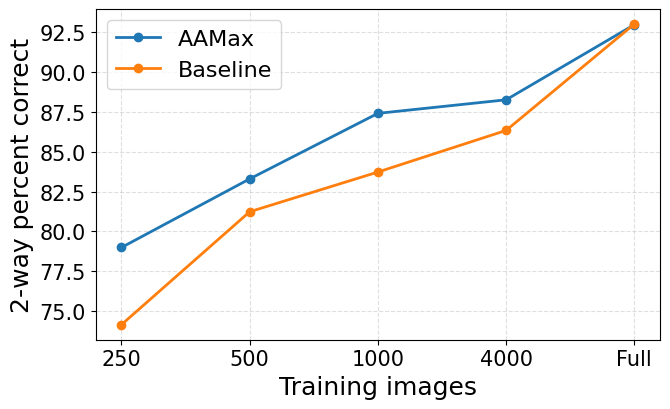}
    \end{subfigure}\hfill
    \begin{subfigure}{0.32\textwidth}
        \centering
        \includegraphics[width=\columnwidth]{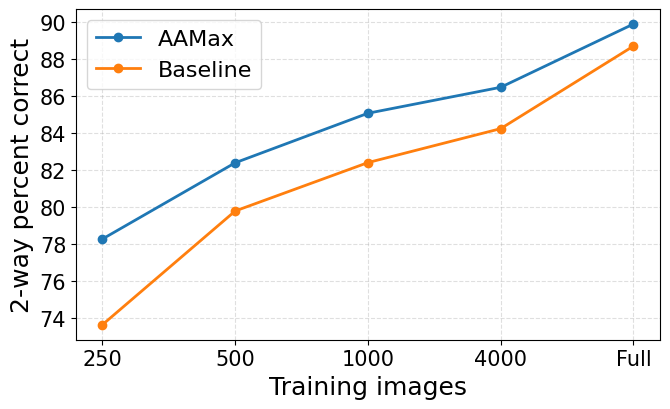}
    \end{subfigure}
    \caption{Performance of AAMax compared to baseline fine-tuning as a function of training data size. Left: AlexNet-5 2-way percent correct. Middle: Inception 2-way percent correct. Right: CLIP 2-way percent correct. Across all three metrics, AAMax achieves higher accuracy in limited-data regimes, with the advantage diminishing as more data becomes available.}
    \label{fig:limited_scaling}
\end{figure*}

\subsection{Generalizing To Different Architectures}\label{sec:diff_archs}
Adapter Alignment is inherently architecture-agnostic: any reconstruction model can be extended with subject-specific adapters, enabling alignment to a common space. For efficiency and comparability with prior work, all of our main experiments are conducted using a slightly modified MindEye1 pipeline. To demonstrate the architecture-agnostic nature of Adapter Alignment, we additionally evaluate a limited-data setting using 1 hour of training data to fine-tune Subject 2. We compare baseline fine-tuning and AAMax under both the MindEye1 and MindEye2 architectures, and observe consistent improvements with AAMax across both models. Additional experiments using MindEye2 are presented in Appendix \ref{app:me2_diffsub}.

\begin{table}[h]
    \centering
    \caption{Evaluating Adapter Alignment across architectures. Subject 1 is trained as the reference, and Subject 2 is fine-tuned with 1 hour of training data. Results are shown for MindEye1 and MindEye2 under the baseline and AAMax fine-tuning. These results demonstrate that AA is architecture-agnostic. Moreover, AA scales with the underlying architecture—stronger models such as MindEye2 achieve higher absolute performance, with their gains further enhanced by AA.}

    \vskip 0.1in
    \setlength{\tabcolsep}{2pt}  
    \renewcommand{\arraystretch}{1.1}  
    \footnotesize  
    \resizebox{0.8\columnwidth}{!}{  
    \begin{tabular}{lccccccccc}
        \toprule
        & \multicolumn{4}{c}{\textbf{Low-Level}} & \multicolumn{4}{c}{\textbf{High-Level}} \\
        \cmidrule(lr){2-5} \cmidrule(lr){6-9}
        \textbf{Method} & PixCorr $\uparrow$ & SSIM $\uparrow$ & Alex(2) $\uparrow$ & Alex(5) $\uparrow$ & Incep $\uparrow$ & CLIP $\uparrow$ & Eff $\downarrow$ & SwAV $\downarrow$\\
        \midrule
        MindEye1 Subject 2 Baseline     & \color{gray}{0.078} & \color{gray}{0.264} &  75.33\% &  84.18\% & 74.15\% & 73.66\% & 0.862 & 0.524 \\
        MindEye1 Subject 2 AAMax & \color{gray}{0.103} & \color{gray}{0.283} &  81.77\% &  89.39\% & 80.31\% & 78.71\% & 0.824 & 0.495 \\ 
        \hline
        MindEye2 Subject 2 Baseline    & \color{gray}{0.104} & \color{gray}{0.354} &  81.71\% &  88.06\% & 78.84\% & 74.15\% & 0.845 & 0.499 \\
        MindEye2 Subject 2 AAMax      & \color{gray}{0.140} & \color{gray}{0.355} &  \textbf{82.22\%} &  \textbf{89.67\%} & \textbf{82.97\%} & \textbf{78.79\%} & \textbf{0.821} & \textbf{0.483} \\

        \bottomrule
    \end{tabular}
    }
    \label{tab:archcomparison}
\end{table}

\section{Generalizing to Other Datasets}\label{sec:things_dataset}
To demonstrate that our proposed method generalizes across datasets, we conducted experiments on the THINGS dataset. We used a model pre-trained on Subject 1 from the NSD dataset and fine-tuned it on Subject 1 from the THINGS dataset. Since a one-to-one mapping does not exist between the THINGS and NSD datasets, we build an implicit common set by comparing the train images of THINGS with the NSD common set to find a subset of semantically similar image pairs. CLIP is used to identify semantically similar images. In the end, ~150 images are chosen as the common set for the THINGS dataset. As shown in Table \ref{tab:things_comparison}, AAMax again outperforms conventional fine-tuning. This result is especially impressive as AAMax was able to semantically align the images even without an explicit alignment. We present additional experiments in Appendix \ref{app:common_space} to show that AAMax actually improves alignment in the shared space.
\begin{figure*}[ht]
    \centering
    \begin{subfigure}{0.8\textwidth}
        \centering
        \includegraphics[width=\columnwidth]{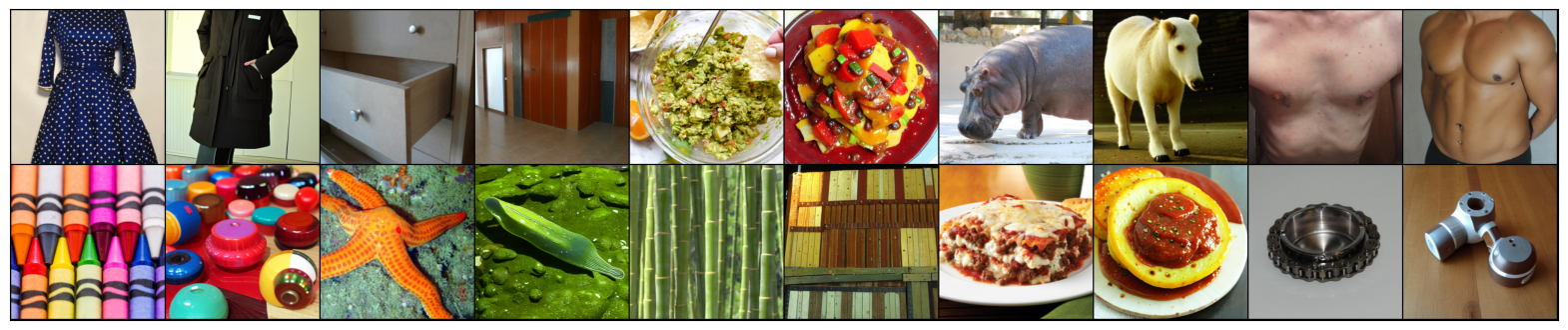}
    \end{subfigure}\hfill
    \caption{Examples of reconstructed images from Subject 1 of the THINGS dataset after fine-tuning. Each pair depicts the ground truth (left) and the reconstructed image (right).}
    \label{fig:things_recons}
\end{figure*}

Several factors may explain the observed performance differences between the two datasets. While a plurality of recent fMRI reconstruction reports have focused on analyses of the large and high-quality Natural Scenes Dataset, which measured high-SNR 7 Tesla fMRI data at a 1.8 mm resolution, such data is often out of reach for most neuroscience labs. The modal imaging setup available to many cognitive neuroscience labs is a 3T scanner, such as a Siemens Prisma, which can achieve relatively high-quality images at 2--2.5 mm resolution and reasonable sampling rate (0.5--2 s). The THINGS-fMRI dataset was recently made available, which includes 3T fMRI data acquired at 2 mm resolution (with multiband acceleration) while participants viewed images drawn from the THINGS object database. While, in principle, this dataset is very similar to the NSD, there are notable differences in SNR (7T vs 3T), stimulus presentation protocol (NSD: 3 s stimulus presentation/1 s ITI; THINGS: 0.5 s stimulus presentation/4 s Inter-Trial Interval (ITI)), stimulus set (NSD: MS-COCO, THINGS: curated object images), and task performed by participants (NSD: recognition memory on each image; THINGS: object or not?). Thus, differences in decoding or reconstruction performance are expected, and not themselves a demonstration that models do not reliably generalize across datasets. 

\begin{table}[h]
    \centering
    \caption{Quantitative results on fine-tuning THINGS-fMRI Subject 1 using NSD Subject 1 as the reference subject. Only the high level pipeline is trained and therefore, pixelwise metrics are excluded. AAMax outperforms the baseline even when an explicit data mapping is not available.}
    \scriptsize
    \vskip 0.1in
    \setlength{\tabcolsep}{2pt}  
    \renewcommand{\arraystretch}{1.1}  
    \footnotesize  
    \resizebox{0.75\columnwidth}{!}{  
    \begin{tabular}{lccccccc}
        \toprule
        & \multicolumn{2}{c}{\textbf{Low-Level}} & \multicolumn{4}{c}{\textbf{High-Level}} \\
        \cmidrule(lr){2-3} \cmidrule(lr){4-7}
        \textbf{Method} & Alex(2) $\uparrow$ & Alex(5) $\uparrow$ & Incep $\uparrow$ & CLIP $\uparrow$ & Eff $\downarrow$ & SwAV $\downarrow$\\
        \midrule
        
        NSD Subj1 Pre-train & 92.80\% & 96.88\% & 94.40\% & 90.02\% & 0.692 & 0.399 \\ 
        \hline
        THINGS Subj1 Individual & 68.02\% & 72.6\% & 58.8\% & 64.29\% & 0.947 & 0.649 \\ 
        \cline{2-7}
        THINGS Subj1 Baseline Fine-Tune & 77.46\% & 81.07\% & 71.06\% & 73.16\% & 0.911 & 0.619 \\  
        THINGS Subj1 AAMax Fine-Tune & \textbf{78.27\%} & \textbf{85.55\%} & \textbf{74.10\%} & \textbf{81.37\%} & \textbf{0.902} & \textbf{0.595} \\  
        \hline
    \end{tabular}
    }

    \label{tab:things_comparison}
\end{table}

\section{Conclusions \& Discussion}\label{sec:conclusion}
In this work, we provide evidence that fMRI signals from different subjects can be structurally aligned within a shared representation space, enabling subject-agnostic modeling of visual brain activity. Building on this, we introduce Adapter Alignment (AA), a novel training paradigm that consistently outperforms traditional end-to-end pipelines in low-data regimes while achieving comparable performance in high-data settings. We further propose a greedy image selection algorithm that identifies representative training images, together reducing the data required for fine-tuning by up to 60\% while matching baseline performance, and by 40\% while significantly outperforming it. These contributions demonstrate that accurate visual reconstructions can be achieved with substantially less data than previously thought feasible, greatly improving the efficiency of fMRI-based reconstruction.

Our experiments show that AA generalizes across subjects, architectures, and datasets, highlighting its robustness and broad applicability. By leveraging the structure of the shared representation space, we make fMRI-based visual reconstruction more efficient and accessible. Future work could investigate richer adapter designs and extend these alignment strategies to other brain signal modalities, further advancing the goal of generalizable brain-to-image reconstruction.

\newpage


\bibliographystyle{plainnat}

\newpage
\appendix


\section*{Appendix Contents}
\addcontentsline{toc}{section}{Appendix Contents}

\begin{itemize}
    \item \hyperref[app:me2_diffsub]{\textbf{A} Fine-tuning on Different Subjects with MindEye2}
    \item \hyperref[app:diff_adapters]{\textbf{B} Generalizing to Different Adapter Architectures}
    \item \hyperref[app:diff_refsub]{\textbf{C} Generalizing to Different Reference Subjects}
    \item \hyperref[app:increase_evdim]{\textbf{D} Image Selection Algorithm Ablations}
    \item \hyperref[app:train_strat]{\textbf{E} Comparing Different Alignment Training Strategies}
    \item \hyperref[app:more_limited]{\textbf{F} Extended Limited Data Experiments}
    \item \hyperref[app:epoch_ablations]{\textbf{G} Ablation Study on the two phases of Adapter Alignment Fine-Tuning}
    \item \hyperref[app:scale]{\textbf{H} Additional Scaling Analysis}
    \item \hyperref[app:common_space]{\textbf{I} Common Space Empirical Analysis}
    \item \hyperref[app:common_concepts]{\textbf{J} Emergence of Common Concepts: Additional Details}
    \item \hyperref[app:nphardness]{\textbf{K} Image Selection Algorithm Proofs}
    \item \hyperref[app:tr_time]{\textbf{L} Training Time}
    \item \hyperref[app:hlp_breakdown]{\textbf{M} High-Level Pipeline Training Process}
    \item \hyperref[app:llp_breakdown]{\textbf{N} Low-Level Pipeline Training Process}
    \item \hyperref[app:recons]{\textbf{O} Sample Reconstructions}
\end{itemize}

\section{Fine-tuning on Different Subjects with MindEye2}\label{app:me2_diffsub}
The main text presents results on fine-tuning Subject 2 using Subject 1 as reference. Here we present the results for Subject 5 and Subject 7 too with 1 hour of training data. In both cases, AAMax outperforms the baseline.

\begin{table}[h]
    \centering
    \caption{Fine-tuning on Subjects 5 and 7 using MindEye2's architecture. AAMax outperforms the baseline. The same reference subject checkpoint was used in all the runs and all models were trained with the exact same hyperparameter setup.}

    \vskip 0.1in
    \setlength{\tabcolsep}{2pt}  
    \renewcommand{\arraystretch}{1.1}  
    \footnotesize  
    \resizebox{0.8\columnwidth}{!}{  
    \begin{tabular}{lccccccccc}
        \toprule
        & \multicolumn{4}{c}{\textbf{Low-Level}} & \multicolumn{4}{c}{\textbf{High-Level}} \\
        \cmidrule(lr){2-5} \cmidrule(lr){6-9}
        \textbf{Method} & PixCorr $\uparrow$ & SSIM $\uparrow$ & Alex(2) $\uparrow$ & Alex(5) $\uparrow$ & Incep $\uparrow$ & CLIP $\uparrow$ & Eff $\downarrow$ & SwAV $\downarrow$\\
        \midrule
        Subject 5 FT 1HR Baseline  & 0.102 & \textbf{0.358} &  78.40\% &  86.61\% & 79.85\% & 77.23\% & 0.839 & 0.494 \\
        Subject 5 FT 1HR AAMax  & \textbf{0.121} & 0.354 &  \textbf{80.68\%} & \textbf{88.86\%} & \textbf{85.12\%} & \textbf{81.89\%} & \textbf{0.799} & \textbf{0.469} \\
        \hline
        Subject 7 FT 1HR Baseline  & 0.103 & 0.356 & 78.50\% & 84.70\% & 76.81\% & 74.02\% & 0.858 & 0.506 \\
        Subject 7 FT 1HR AAMax  & \textbf{0.131} & \textbf{0.390} & \textbf{80.80\%} & \textbf{87.86\%} & \textbf{80.29\%} & \textbf{76.55\%} & \textbf{0.836} & \textbf{0.508} \\
        \bottomrule
    \end{tabular}
    }
    \label{tab:me2diffsub}
\end{table}

\section{Generalizing to Different Adapter Architectures}\label{app:diff_adapters}
To test the sensitivity of Adapter Alignment to the choice of adapter, we compared several lightweight variants. As shown in Table~\ref{tab:adapterablation}, the choice of adding or removing a non-linearity, or substituting different activation functions (e.g., GELU vs. ReLU), does not significantly affect alignment or reconstruction performance. In practice, simpler architectures tend to perform slightly better, with a single-layer adapter outperforming deeper variants. Our adapter choice in the main experiments was guided by the underlying backbone: we use a nonlinear (linear + GELU) adapter with MindEye1 and retain the linear adapter design of MindEye2 for consistency with the original paper. Overall, these results indicate that the specific adapter design does not affect the validity of Adapter Alignment, and performance is robust across lightweight variants.

\begin{table}[h]
    \centering
    \caption{Ablation on adapter variants for Subject 2 fine-tuned with 1 hour of data under AAMax training. We compare a nonlinear adapter with GELU, a nonlinear adapter with ReLU, and a two-layer linear adapter. Results show that adapter choice has little impact on performance, with all variants yielding similar alignment and reconstruction quality.}
    \vskip 0.1in
    \setlength{\tabcolsep}{2pt}  
    \renewcommand{\arraystretch}{1.1}  
    \footnotesize  
    \resizebox{0.8\columnwidth}{!}{  
    \begin{tabular}{lccccccccc}
        \toprule
        & \multicolumn{4}{c}{\textbf{Low-Level}} & \multicolumn{4}{c}{\textbf{High-Level}} \\
        \cmidrule(lr){2-5} \cmidrule(lr){6-9}
        \textbf{Method} & PixCorr $\uparrow$ & SSIM $\uparrow$ & Alex(2) $\uparrow$ & Alex(5) $\uparrow$ & Incep $\uparrow$ & CLIP $\uparrow$ & Eff $\downarrow$ & SwAV $\downarrow$\\
        \midrule
        Subject 2 FT 1HR AAMAX linear    & \color{gray}{0.110} & \color{gray}{0.277} &  79.70\% &  88.61\% & 80.35\% & 79.88\% & 0.815 & 0.486 \\
        Subject 2 FT 1HR AAMAX GELU      & \color{gray}{0.112} & \color{gray}{0.259} &  78.75\% &  88.04\% & 78.99\% & 78.28\% & 0.824 & 0.513 \\ 
        Subject 2 FT 1HR AAMAX ReLU      & \color{gray}{0.101} & \color{gray}{0.273} &  80.44\% &  88.18\% & 78.72\% & 78.17\% & 0.827 & 0.508 \\
        Subject 2 FT 1HR AAMAX 2 linear  & \color{gray}{0.107} & \color{gray}{0.268} &  78.28\% &  87.12\% & 78.78\% & 78.62\% & 0.827 & 0.504 \\
        \bottomrule
    \end{tabular}
    }
    \label{tab:adapterablation}
\end{table}

\section{Generalizing to Different Reference Subjects}\label{app:diff_refsub}
We conducted additional experiments (Table~\ref{tab:changeref}) to examine how the choice of reference subject influences fine-tuning performance. In these experiments, Subjects 2, 5, and 7 were used as reference subjects, and Subject 1 was fine-tuned with one hour of data. The results show that the reference subject does affect reconstruction quality: Subject 5, which typically achieves the strongest individual reconstructions, yields the best fine-tuning performance, while Subject 7 produces the weakest, consistent with its lower single-subject performance. Nonetheless, Adapter Alignment remains effective in all cases, as AAMax consistently outperforms baseline fine-tuning.

\begin{table}[h]
    \centering
    \caption{Effect of reference subject choice on limited-data fine-tuning. We compare AAMax and normal fine-tuning (160 epochs) when Subjects 2, 5, and 7 are used as reference subjects and Subject 1 is fine-tuned. AAMax consistently outperforms normal fine-tuning regardless of the reference, though overall performance reflects the quality of the chosen reference subject.}

    \vskip 0.1in
    \setlength{\tabcolsep}{2pt}  
    \renewcommand{\arraystretch}{1.1}  
    \footnotesize  
    \resizebox{\columnwidth}{!}{  
    \begin{tabular}{lccccccccc}
        \toprule
        & \multicolumn{4}{c}{\textbf{Low-Level}} & \multicolumn{4}{c}{\textbf{High-Level}} \\
        \cmidrule(lr){2-5} \cmidrule(lr){6-9}
        \textbf{Method} & PixCorr $\uparrow$ & SSIM $\uparrow$ & Alex(2) $\uparrow$ & Alex(5) $\uparrow$ & Incep $\uparrow$ & CLIP $\uparrow$ & Eff $\downarrow$ & SwAV $\downarrow$\\
        \midrule
        Subject 1 FT From Subject 2 1HR Baseline & {\color{gray}0.007} & {\color{gray}0.268} & 75.55\% & 84.00\% & 75.20\% & 74.60\% & 0.862 & 0.520 \\
        Subject 1 FT From Subject 2 1HR AAMAX & \textbf{\color{gray}0.06} & \textbf{{\color{gray}0.277}} & \textbf{81.13\%} & \textbf{88.99\%} & \textbf{79.71\%} & \textbf{78.89\%} & \textbf{0.824} & \textbf{0.494} \\
        \hline
        Subject 1 FT From Subject 5 1HR Baseline & {\color{gray}0.094} & {\color{gray}0.239}  & {71.54\%}  & {80.10\%}         & 73.11\%        & 74.02\% & 0.875 & 0.537 \\ 
        Subject 1 FT From Subject 5 1HR AAMAX  & \textbf{{\color{gray}0.105}}& \textbf{{\color{gray}0.259}} & \textbf{76.77\%}  & \textbf{86.20\%} & \textbf{80.08\%} & \textbf{80.14\%} & \textbf{0.828} & \textbf{0.496} \\  
        \hline
        Subject 1 FT From Subject 7 1HR Baseline & {\color{gray}0.010} & {\color{gray}0.266} & {74.96\%} & {81.55\%} & 71.71\% & 73.72\% & 0.874 & 0.528 \\  
        Subject 1 FT From Subject 7 1HR AAMAX & \textbf{{\color{gray}0.012}} & \textbf{{\color{gray}0.283}} & \textbf{79.58\%} & \textbf{87.78\%} & \textbf{77.48\%} & \textbf{77.87\%} & \textbf{0.835} & \textbf{0.502} \\
        \bottomrule
    \end{tabular}
    }
    \label{tab:changeref}
\end{table}
\section{Image Selection Algorithm Ablations}\label{app:increase_evdim}
\subsection{Algorithm Coverage}
We next evaluate how well the image selection algorithm covers the common space compared to random sampling. Using the top 20 eigenvectors, we discretize each dimension into bins and, for a subset of $K=250$ images, count the number of empty bins (bins without a selected image). Random selection leaves, on average, around 200 empty bins with a standard deviation of about 10 across 1,000 trials. In contrast, our algorithm consistently reduces this number to 15 empty bins. A permutation test yields a p-value of 0.0002, confirming that the improved coverage is statistically significant. This demonstrates that our algorithm selects images that span the representation space more uniformly along high-variance directions. It is important to note that the number of empty bins in our algorithm remain relatively similar when changing the number of top eigenvectors.

\begin{figure}[t]
  \centering
  \includegraphics[width=0.9\columnwidth]{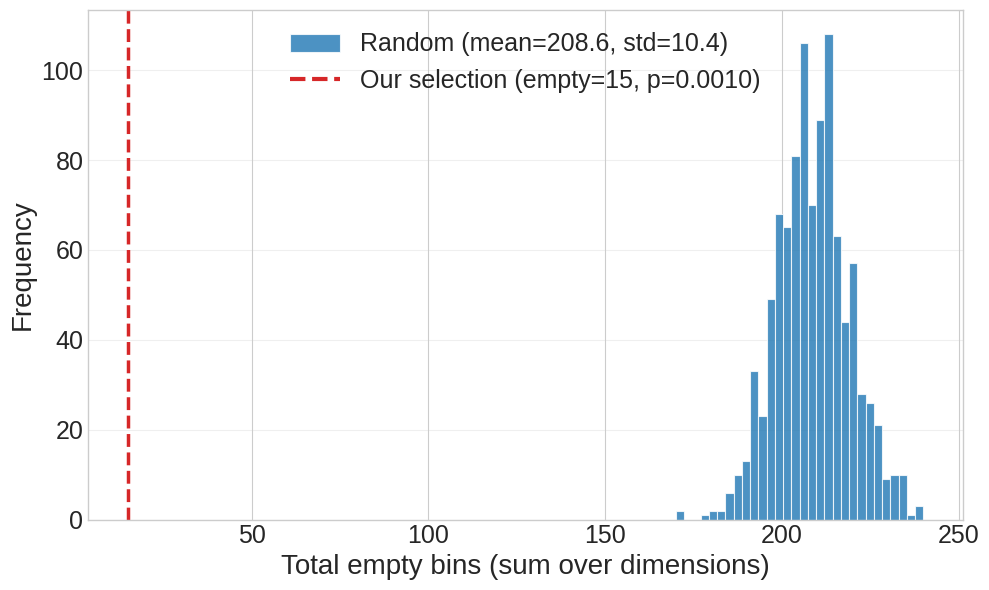}
  \caption{\textbf{Coverage of the image selection algorithm compared to random sampling.}
  We project embeddings onto the top 20 eigenvectors and count empty bins across 250 selected images. 
  The histogram shows empty-bin counts for 1{,}000 random subsets, while the dashed line marks our algorithm’s selection. 
  Our method leaves only 15 bins empty, whereas random sampling leaves $\approx 200 \pm 10$ on average, indicating significantly more uniform coverage of the representation space ($p=0.0002$).}
  \label{fig:is_coverage}
\end{figure}

\subsection{Varying the Eigenvector Dimension}
We conducted an ablation study on the image selection algorithm to evaluate the effect of varying the number of principal eigenvector dimensions used in the binning process. Specifically, we compared subsets selected using the top 10, 20, 30, and 40 eigenvectors. As shown in Table~\ref{tab:isevablation}, performance is sensitive to this choice: 20 eigenvectors yield the strongest results, with a secondary improvement at 40, while 10 and 30 dimensions perform relatively worse. Based on these findings, we adopt 20 eigenvectors for all main experiments in the paper.

\begin{table}[h]
    \centering
    \caption{Ablation on the number of eigenvector dimensions used for image selection. We compare random image selection and Adapter Alignment with images selected using our algorithm across different eigenvector settings. Results are reported for Subject 1 as the reference and Subject 2 fine-tuned with 250 images under AAMax training, averaged over 3 runs. The best performance is obtained with 20 eigenvectors, which is used in all main experiments.}

    \vskip 0.1in
    \setlength{\tabcolsep}{2pt}  
    \renewcommand{\arraystretch}{1.1}  
    \footnotesize  
    \resizebox{0.8\columnwidth}{!}{  
    \begin{tabular}{lccccccccc}
        \toprule
        & \multicolumn{4}{c}{\textbf{Low-Level}} & \multicolumn{4}{c}{\textbf{High-Level}} \\
        \cmidrule(lr){2-5} \cmidrule(lr){6-9}
        \textbf{Method} & PixCorr $\uparrow$ & SSIM $\uparrow$ & Alex(2) $\uparrow$ & Alex(5) $\uparrow$ & Incep $\uparrow$ & CLIP $\uparrow$ & Eff $\downarrow$ & SwAV $\downarrow$\\
        \midrule
        Random Selection & \color{gray}{0.112} & \color{gray}{0.259} &  78.75\% &  88.04\% & 78.99\% & 78.28\% & 0.824 & 0.513 \\
        10 EV Dimensions & \color{gray}{0.114} & \color{gray}{0.303} &  78.99\% &  86.53\% & 77.30\% & 75.23\% & 0.850 & 0.541 \\
        20 EV Dimensions & \color{gray}{0.102} & \color{gray}{0.267} &  81.41\% &  89.55\% & 80.43\% & 80.33\% & 0.823 & 0.496 \\
        30 EV Dimensions & \color{gray}{0.112} & \color{gray}{0.275} &  79.60\% &  88.03\% & 78.58\% & 76.89\% & 0.850 & 0.537 \\
        40 EV Dimensions & \color{gray}{0.118} & \color{gray}{0.285} &  81.78\% &  89.61\% & 81.49\% & 79.55\% & 0.810 & 0.491 \\
        
        \bottomrule
    \end{tabular}
    }
    \label{tab:isevablation}
\end{table}


\section{Comparing Different Alignment Training Strategies}\label{app:train_strat}
To directly compare our approach against related methods, we conducted an experiment fine-tuning Subject 2 with 250 training images under three different training regimes (Table~\ref{tab:}). First, we evaluated AA Step 1 only, which corresponds to the setting of Ferrante et al.~\citep{ferrante2024through}, where adapters are overfitted to the common space without any subsequent end-to-end training. Second, we implemented a variant inspired by MindBridge~\citep{wang2024mindbridge}, where the adapter remains unfrozen while the MLP and prior are frozen, and the full end-to-end training is carried out using a combined MSE, CLIP, and prior loss. Finally, we compared these baselines to our full AAMax training regime, which explicitly aligns adapters in the first phase and then fine-tunes the entire pipeline end-to-end.

The results show that AAMax achieves the strongest performance across both low-level and high-level metrics, outperforming both AA Step 1 and the MindBridge-style end-to-end setup. Importantly, this experiment was run under identical architectural settings and without auxiliary enhancements such as additional synthetic data or architectural add-ons. This isolates the effect of the training paradigm itself, demonstrating that when compared purely on training objectives, AAMax provides the most effective strategy for fine-tuning new subjects.

\begin{table}[h]
    \centering
    \caption{Comparison of different training regimes for fine-tuning Subject 2 with 250 training images. We evaluate (1) AAMax (full Adapter Alignment with staged alignment + end-to-end fine-tuning), (2) AA Step 1 only (adapter overfitting without subsequent end-to-end training), and (3) End-to-End with MSE+Prior+CLIP Loss (MLP Frozen) (reset-tuning). AAMax achieves the best performance across both low-level and high-level metrics, demonstrating that explicit staged alignment followed by end-to-end fine-tuning is more effective than either alignment-only or combined end-to-end training when compared under identical architectural conditions.}

    \vskip 0.1in
    \setlength{\tabcolsep}{2pt}  
    \renewcommand{\arraystretch}{1.1}  
    \small  
    \resizebox{0.9\columnwidth}{!}{  
    \begin{tabular}{lccccccccc}
        \toprule
        & \multicolumn{4}{c}{\textbf{Low-Level}} & \multicolumn{4}{c}{\textbf{High-Level}} \\
        \cmidrule(lr){2-5} \cmidrule(lr){6-9}
        \textbf{Method} & PixCorr $\uparrow$ & SSIM $\uparrow$ & Alex(2) $\uparrow$ & Alex(5) $\uparrow$ & Incep $\uparrow$ & CLIP $\uparrow$ & Eff $\downarrow$ & SwAV $\downarrow$\\
        \midrule
        AAMAX      & \color{gray}{0.112} & \color{gray}{0.259} &  \textbf{78.75\%} &  \textbf{88.04\%} & \textbf{78.99\%} & \textbf{78.28\%} & \textbf{0.824} & 0.513 \\
        AA Step 1 only & \color{gray}{0.008} & \color{gray}{0.240} &  77.14\% &  87.04\% & 77.00\% & 75.43\% & 0.857 & 0.510 \\ 
        End-to-End with MSE+Prior+CLIP Loss (MLP Frozen) & \color{gray}{0.007} & \color{gray}{0.238} &  76.40\% &  86.42\% & 77.72\% & 75.04\% & 0.852 & \textbf{0.507} \\
        \bottomrule
    \end{tabular}
    }
    \label{tab:}
\end{table}

\section{Extended Limited Data Experiments}\label{app:more_limited}
To further validate the efficiency of Adapter Alignment, we repeated our limited-data experiments using Subject 5 as the reference and Subject 1 as the fine-tuning subject. We evaluate fine-tuning with 1, 2, and 4 hours of data and compare baseline end-to-end training to AAMax. As always, AAMax first performs Step 1 adapter alignment by overfitting the adapter, followed by end-to-end fine-tuning. We track performance after 1 epoch of fine-tuning and after 160 epochs of fine-tuning. Even after a single epoch, AAMax consistently outperforms baseline models trained for 160–200 epochs.

We also track performance over fine-tuning epochs for all three data conditions (1, 2, and 4 hours). As shown in Figure~\ref{fig:limited_data_convergence}, Adapter Alignment converges rapidly and maintains a consistent advantage throughout training, while baseline training requires hundreds of epochs to reach comparable accuracy. These results reinforce that AAMax substantially reduces both the data and computational requirements for fine-tuning new subjects.

\begin{table}[h]
    \centering
    \caption{Quantitative results on fine-tuning a new subject with limited data. FT=Fine tuning. The first section shows the result of fine-tuning with 250 (1 hour) images after 1 epoch and after 160 epochs on the baseline and using AAMax. The second section compares performance on 500 (2 hours) images and the third section compares performance on 1000 (4 hours) images. In all cases AAMax, significantly outperforms the baseline. The greyed out low level metrics are not optimized for in these experiments.}

    \vskip 0.1in
    \setlength{\tabcolsep}{2pt}  
    \renewcommand{\arraystretch}{1.1}  
    \footnotesize  
    \resizebox{0.8\columnwidth}{!}{  
    \begin{tabular}{lccccccccc}
        \toprule
        & \multicolumn{4}{c}{\textbf{Low-Level}} & \multicolumn{4}{c}{\textbf{High-Level}} \\
        \cmidrule(lr){2-5} \cmidrule(lr){6-9}
        \textbf{Method} & PixCorr $\uparrow$ & SSIM $\uparrow$ & Alex(2) $\uparrow$ & Alex(5) $\uparrow$ & Incep $\uparrow$ & CLIP $\uparrow$ & Eff $\downarrow$ & SwAV $\downarrow$\\
        \midrule
        Subject 1 FT 1HR Baseline E1       & {\color{gray}0.031}           & {\color{gray}0.234}       & {\color{gray}58.31\%}         & {\color{gray}62.56\%}         & 58.51\%       & 58.71\% & 0.947 & 0.611 \\          
        Subject 1 FT 1HR AAMAX E1   & {\color{gray}0.095}           & {\color{gray}0.254}       & \textbf{{\color{gray}77.20\%}} & {\color{gray}85.98\%}         & 78.80\%       & 79.39\% & 0.833 & 0.503 \\  
        Subject 1 FT 1HR Baseline E160     & {\color{gray}0.094}           & {\color{gray}0.239}       & {\color{gray}71.54\%}         & {\color{gray}80.10\%}         & 73.11\%        & 74.02\% & 0.875 & 0.537 \\ 
        Subject 1 FT 1HR AAMAX E160    & \textbf{{\color{gray}0.105}}   & \textbf{{\color{gray}0.259}} & {\color{gray}76.77\%}       & \textbf{{\color{gray}86.20\%}} & \textbf{80.08\%} & \textbf{80.14\%} & \textbf{0.828} & \textbf{0.496} \\  
        \hline
        Subject 1 FT 2HR Baseline E1       & {\color{gray}0.049} & {\color{gray}0.242} & {\color{gray}62.41\%} & {\color{gray}68.06\%} & 62.64\% & 64.80\% & 0.925 & 0.583 \\  
        Subject 1 FT 2HR AAMAX E1  & {\color{gray}0.112} & {\color{gray}0.275} & {\color{gray}79.46\%} & {\color{gray}88.29\%} & \textbf{83.28\%} & 82.61\% & 0.803 & 0.485 \\  
        Subject 1 FT 2HR Baseline E160     & {\color{gray}0.110} & {\color{gray}0.270} & {\color{gray}75.32\%} & {\color{gray}84.62\%} & 76.66\% & 78.55\% & 0.845 & 0.516 \\  
        Subject 1 FT 1HR AAMAX E160  & \textbf{{\color{gray}0.114}} & \textbf{{\color{gray}0.285}} & \textbf{{\color{gray}80.00\%}} & \textbf{{\color{gray}88.50\%}} & 83.19\% & \textbf{83.05\%} & \textbf{0.799} & \textbf{0.480} \\  
        \hline
        Subject 1 FT 4HR Baseline E1   & {\color{gray}0.079} & {\color{gray}0.241} & {\color{gray}65.84\%} & {\color{gray}73.04\%} & 64.19\% & 68.10\% & 0.914 & 0.574 \\  
        Subject 1 FT 4HR AAMAX E1  & \textbf{{\color{gray}0.129}} & {\color{gray}0.265} & {\color{gray}80.83\%} & {\color{gray}90.93\%} & \textbf{86.66\%} & 86.17\% & 0.767 & 0.460 \\  
        Subject 1 FT 4HR Baseline E160  & {\color{gray}0.124} & {\color{gray}0.253} & {\color{gray}77.86\%} & {\color{gray}89.18\%} & 83.10\% & 83.26\% & 0.801 & 0.480 \\       
        Subject 1 FT 4HR AAMAX E160  & {\color{gray}0.127} & \textbf{{\color{gray}0.273}} & \textbf{{\color{gray}81.73\%}} & \textbf{{\color{gray}91.29\%}} & 86.23\% & \textbf{86.26\%} & \textbf{0.763} & \textbf{0.453} \\  
        \bottomrule
    \end{tabular}
    }
    \label{tab:app_limitedcomparison}
\end{table}

\begin{figure*}[h]
    \centering
    \begin{subfigure}{0.32\textwidth}
        \centering
        \includegraphics[width=\columnwidth]{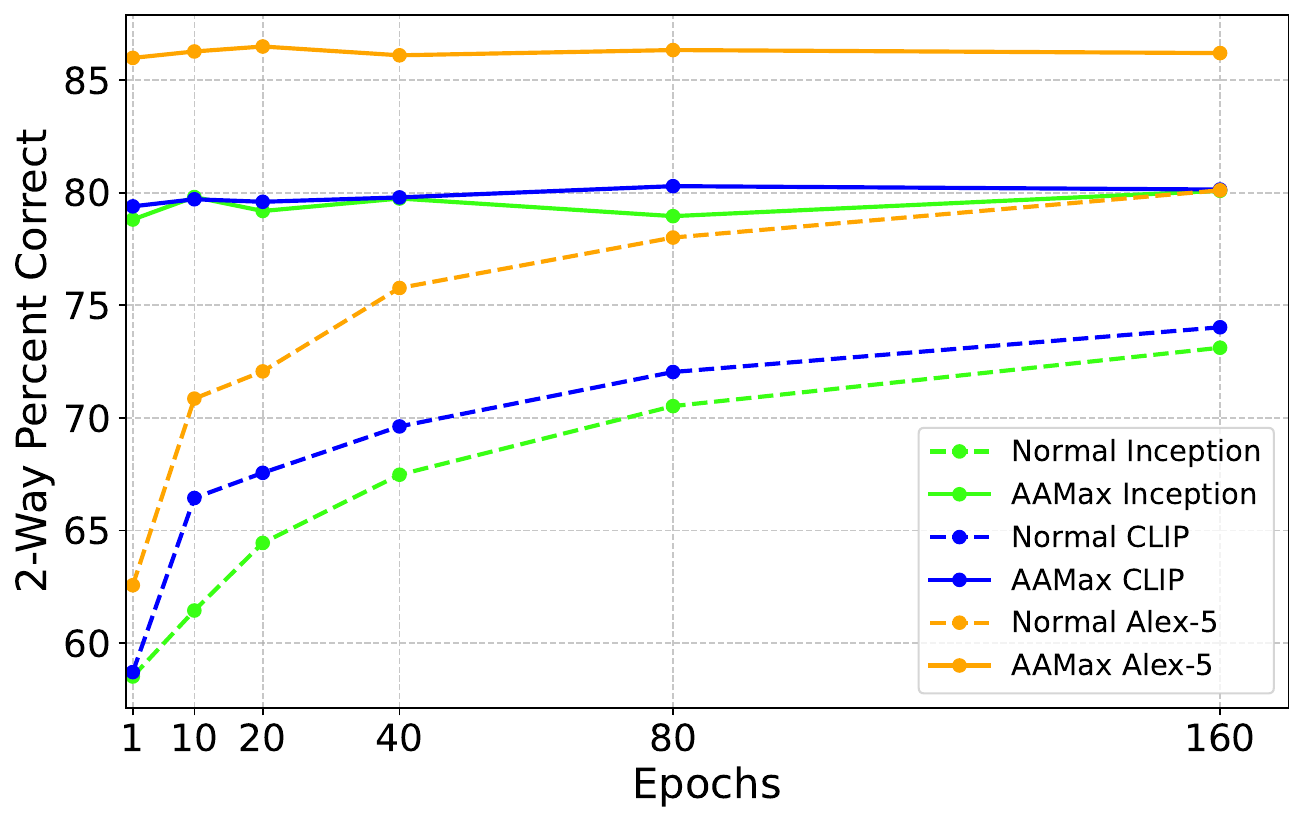}
    \end{subfigure}\hfill
    \begin{subfigure}{0.32\textwidth}
        \centering
        \includegraphics[width=\columnwidth]{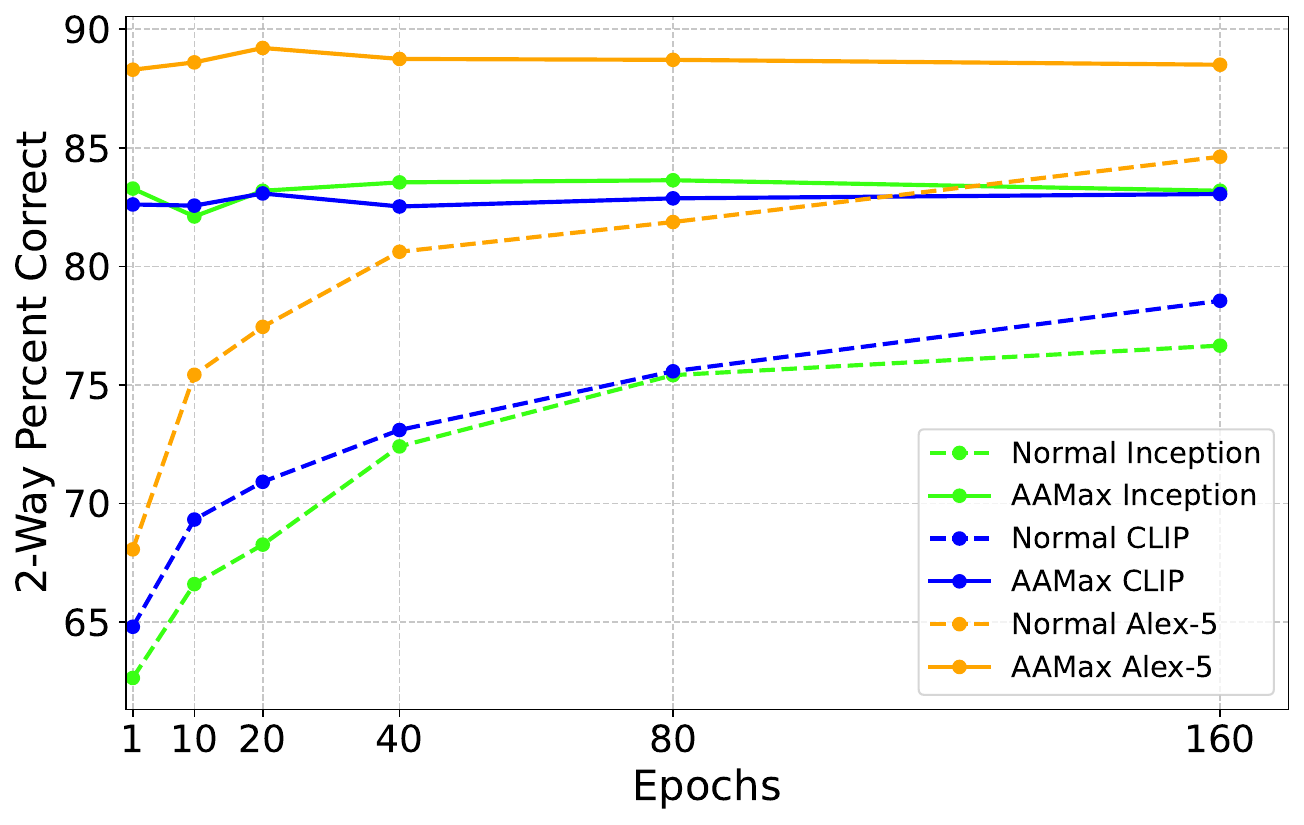}
    \end{subfigure}\hfill
    \begin{subfigure}{0.32\textwidth}
        \centering
        \includegraphics[width=\columnwidth]{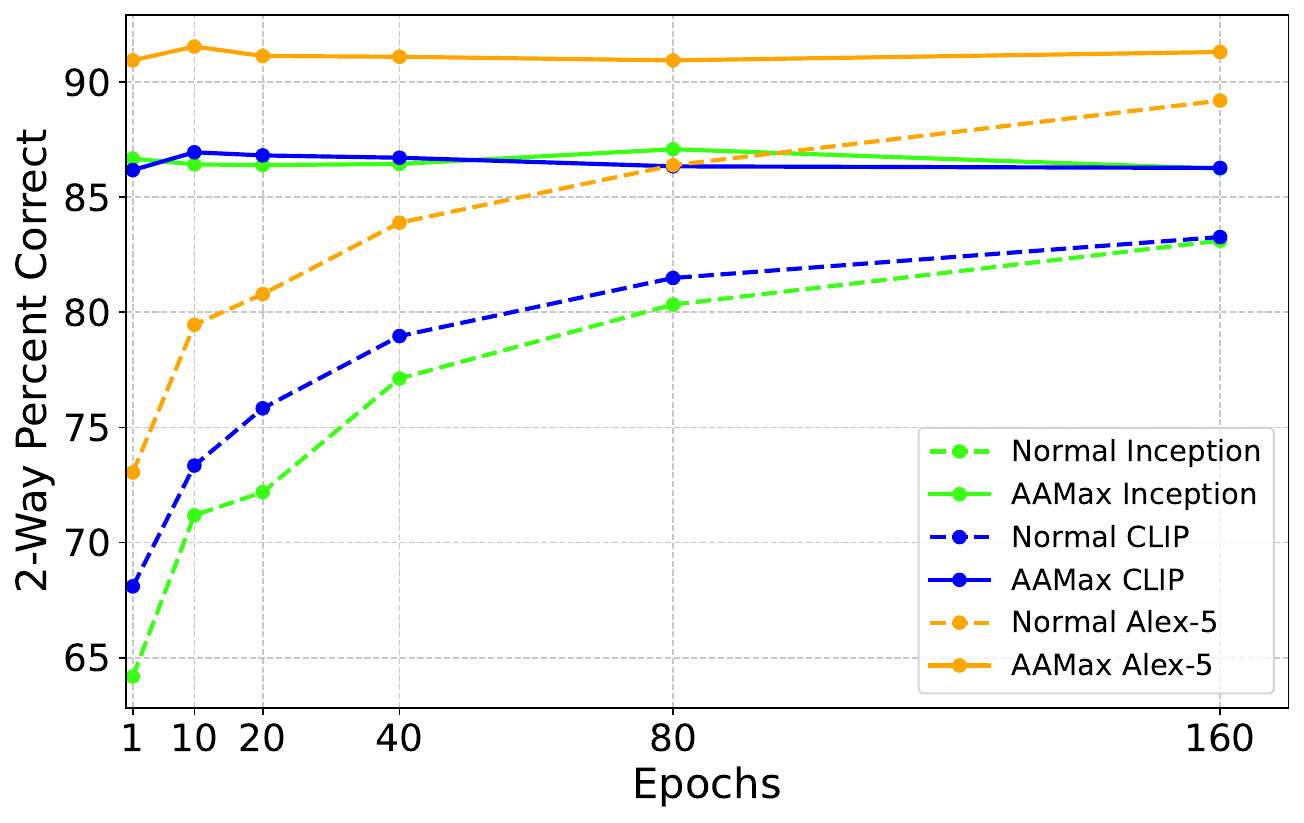}
    \end{subfigure}
    \caption{Comparing finetuning reconstruction performance over epochs with limited data. From left to right: (a) Finetuning Subject 1 with 250 images or 1 hour of data (b) Finetuning Subject 1 with 500 images or 2 hours of data (c) Finetuning Subject 1 with 1000 images or 3 hours of data. AAMax not only outperforms normal end-to-end training, but the performance at Epoch 1 is better than the baseline's performance at Epoch 160.}
    \label{fig:limited_data_convergence}
\end{figure*}
\section{Ablation Study on the Two Phases of Adapter Alignment Fine-Tuning}\label{app:epoch_ablations}
We analyze the role of Step 1 (adapter alignment) relative to full end-to-end fine-tuning. In this experiment, we fine-tuned Subject 2 with 1 hour of data using Subject 1 as the reference. Figure~\ref{fig:aa_vs_e2e} illustrates the reconstruction performance across epochs for both phases.

During Step 1, where only the adapter is optimized, performance improves steadily as the adapter overfits to the shared images. This phase alone provides a substantial boost over the unaligned baseline, and after sufficient epochs the adapter can achieve strong alignment across subjects. However, the improvement plateaus, as Step 1 only modifies the adapter and does not allow deeper components of the pipeline to adapt.

Introducing end-to-end fine-tuning after adapter alignment leads to an immediate jump in performance. Even after a single epoch of end-to-end updates, the model surpasses what hundreds of adapter-only epochs can achieve. We stop at around 200 epochs as the relative gain plateaus again. While the improvement in performance by a few percent from end-to-end fine-tuning might not seem significant when compared to the gain from Step 1, it can make the difference between coherent images and gibberish when working with limited data and when reconstruction fidelity is low.

\begin{figure}[t]
  \centering
  \includegraphics[width=0.9\columnwidth]{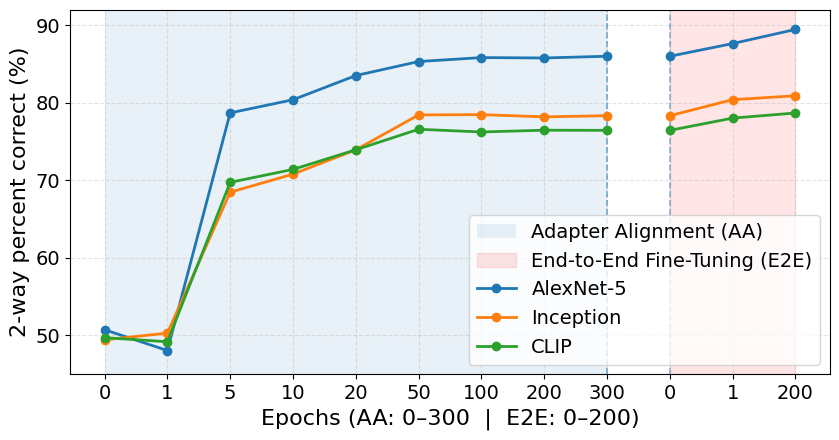}
  \caption{Reconstruction performance across epochs for Subject 2 fine-tuned with 1 hour of data using Subject 1 as the reference. The shaded regions denote Step 1 (adapter alignment) and Step 2 (end-to-end fine-tuning). Performance steadily increases during adapter alignment but plateaus short of optimal results. Transitioning to end-to-end fine-tuning yields an immediate jump in accuracy and ultimately outperforms adapter-only training. The dashed lines indicate the transition from AA to E2E. Performance at Epoch 300 of AA is the same as Epoch 0 of E2E, since no further fine-tuning has been done yet.}
  \label{fig:aa_vs_e2e}
\end{figure}

\section{Additional Scaling Analysis}\label{app:scale}

 The architectural setup used by MindEye2 leverages a 4096-dimensional shared space with BigCLIP and Stable Diffusion XL (SDXL), requiring substantially greater computational resources than our main experiments, which employ Versatile Diffusion. To evaluate whether the benefits of Adapter Alignment (AAMax) persist across different model capacities, we conducted a scaling law analysis varying both the dimensionality of the shared space and the choice of generative backbone.
 

\subsection{Empirical Observations}
\begin{enumerate}
    \item \textbf{Architectural Improvements at 1024 Dimensions:} Using a 1024-dimensional shared space, we find that BigCLIP + SDXL outperforms Versatile Diffusion, confirming that a higher-capacity CLIP space and diffusion model contribute to improved reconstruction performance.
    \item \textbf{Scaling Behavior with Versatile Diffusion:} We vary the shared space dimensionality from 512 → 1024 → 4096, using adapter alignment (AAMax) fine-tuning and observe consistent improvements in reconstruction quality as dimensionality increases. 
\end{enumerate}

These findings indicate that AAMax is robust to architectural choice and continues to provide benefits regardless of the CLIP encoder or diffusion model employed.

\subsection{Projection to 4096 Dimensions with Stable Diffusion XL}
The key insight from these experiments is that \textbf{adapter alignment is a training paradigm, not an architectural modification}. Our results demonstrate that AA is agnostic to the representational scale and generative backbone, and scales effectively with larger models. While large-scale pipelines like MindEye2 set upper bounds on reconstruction quality under high compute budgets, adapter alignment provides a complementary framework that improves efficiency and accessibility. Importantly, its benefits persist even when extended to high-capacity architectures, making it a broadly applicable strategy for fMRI-based reconstruction.

\begin{figure}[h]
    \centering
    \begin{subfigure}{0.4\textwidth}
        \centering
        \includegraphics[width=\textwidth]{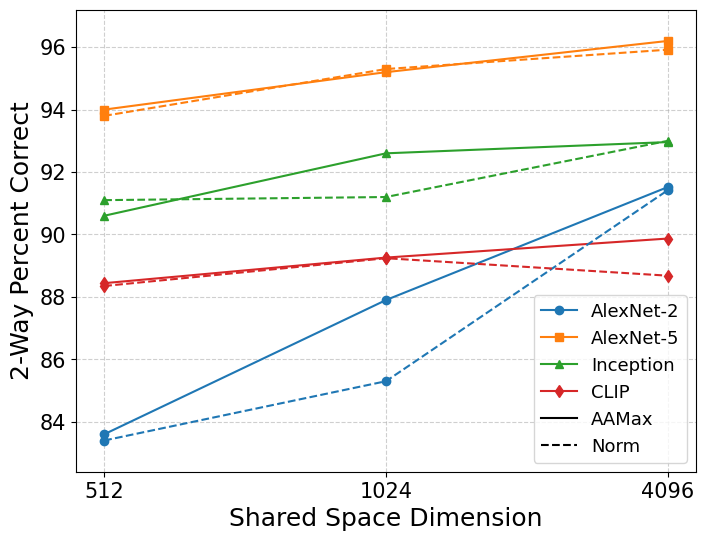}
    \end{subfigure}
    \begin{subfigure}{0.4\textwidth}
        \centering
        \includegraphics[width=\textwidth]{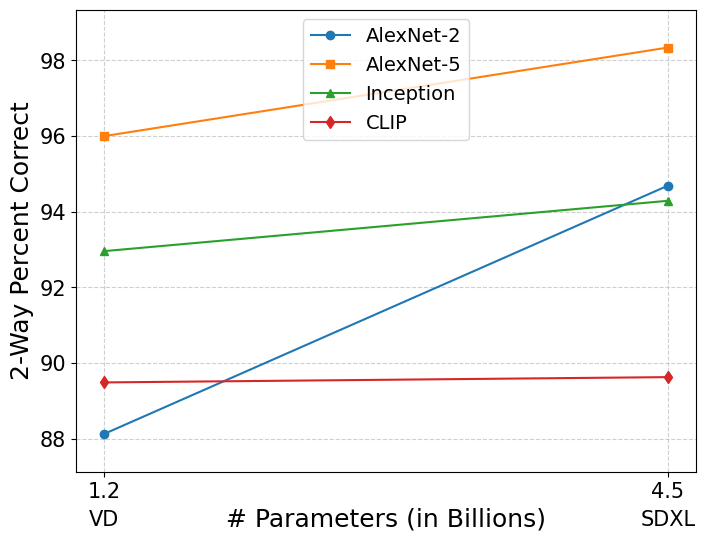}
    \end{subfigure}\hfill
    \caption{Comparing performance with scale for AAMax. \textbf{Left:} AAMax vs regular finetuning performance over increasing shared space size. AAMax scales with shared space dimensionality. \textbf{Right:} AAMax finetuning performance at 1024 shared space with Versatile Diffusion and Stable Diffusion XL. All metrics improve when moving to a larger diffusion model.}
    \label{fig:ablation}
\end{figure}

\section{Common Space Empirical Analysis}\label{app:common_space}
We run several empirical tests to validate the effectiveness of Adapter Alignment in aligning the common space across subjects. For any test that requires a one-to-one mapping we use the common images else we use the test set to perform our analysis. 

\subsection{Eigenvector Analysis}
To first investigate the structural alignment of embeddings across subjects, we analyzed the subject-wise cosine similarity of the principal eigenvectors in the embedding spaces produced by two techniques: the standard end-to-end training (Baseline) and our proposed approach (AAMax). For both techniques, we extracted the top five principal eigenvectors from the test data for each subject and computed the cosine similarity between the eigenvectors from different subjects.
\begin{figure}[h]
    \centering
    \begin{subfigure}{\textwidth}
        \centering
        \includegraphics[width=\textwidth]{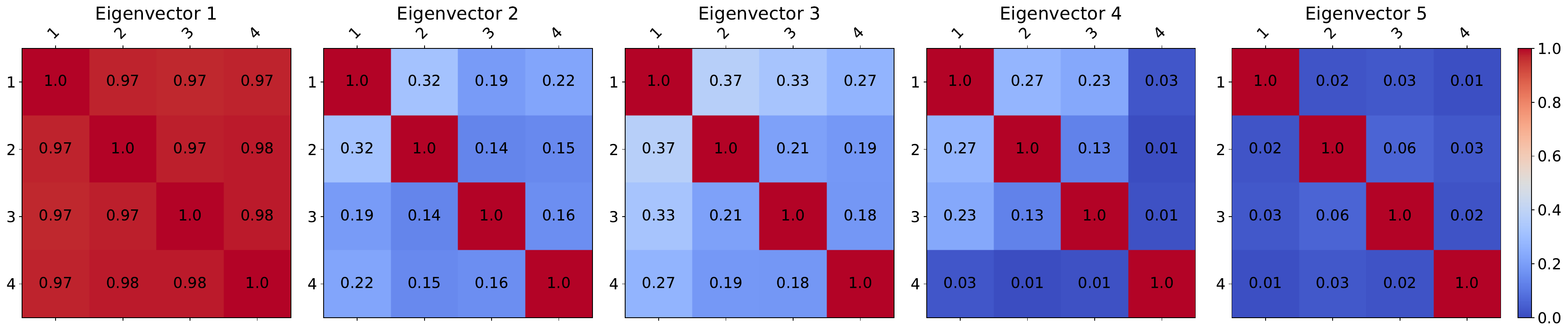}
    \end{subfigure}\hfill
    \begin{subfigure}{\textwidth}
        \centering
        \includegraphics[width=\textwidth]{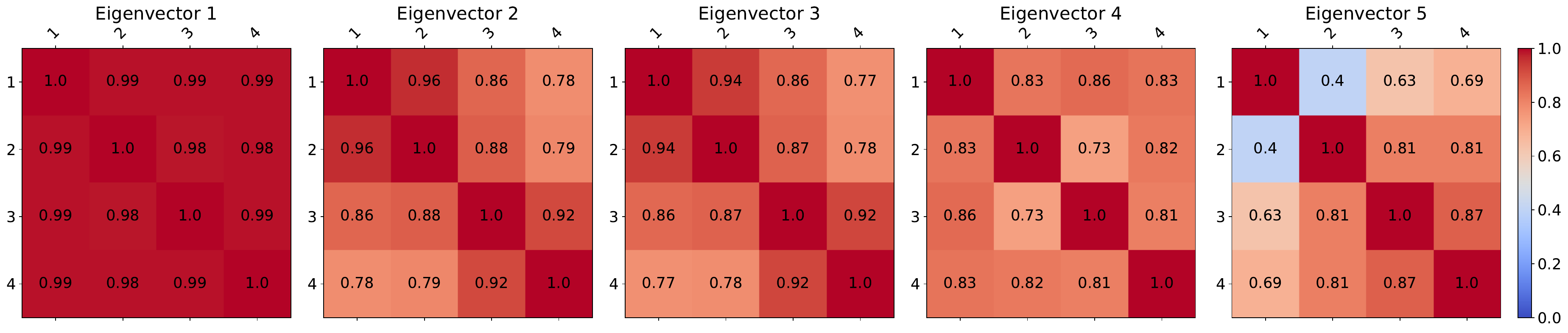}
    \end{subfigure}\hfill
    \caption{Comparing eigenvector cosine similarity for the test data across subjects in both training techniques. We use the first 5 principal eigenvectors. Row 1 shows the results for normal end-to-end training. Except for the first eigenvector, cosine similarity is very low across subjects. Row 2 shows the results for AAMax training. Cosine similarity is high even at the 5th principal eigenvector.}
    \label{fig:eigenspace_common_aamax}
\end{figure}
In normal end-to-end training, we observed consistently low cosine similarity between all but the first principal eigenvector of different subjects, suggesting that the primary directions of variance captured by the embeddings were highly subject-specific and lacked alignment across subjects. This indicates that the embedding space learned in this approach does not generalize well to a common structure, resulting in subject-specific representations that diverge significantly in terms of principal variance.

\begin{figure}[h]
    \centering
    \begin{subfigure}{0.7\textwidth}
        \centering
        \includegraphics[width=\textwidth]{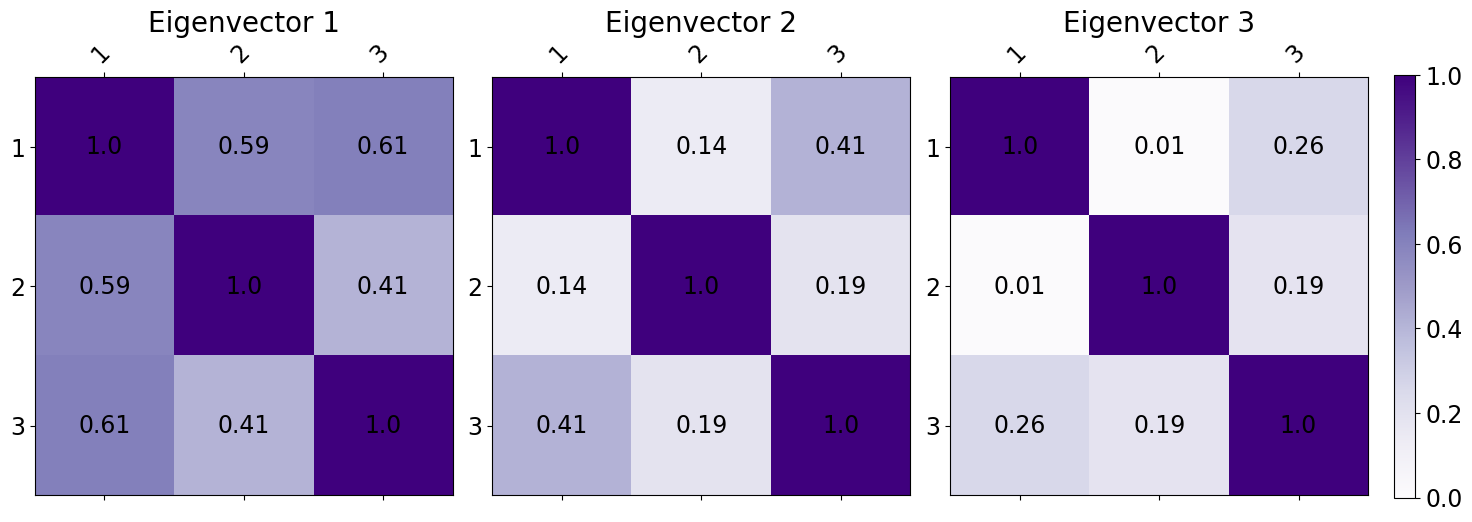}
    \end{subfigure}\hfill
    \caption{Comparing eigenvector cosine similarity for the test data across datasets in both training techniques. 1. Reference Subject 1 NSD 2. Normal Finetune Subject 1 THINGS 3. AAMax Finetune Subject 1 THINGS. AAMax training generates better aligned embedding spaces.}
    \label{fig:eigenspace_things_common_aamax}
\end{figure}

In contrast, AAMax training showed significantly higher cosine similarity between the principal eigenvectors of different subjects. This indicates that AAMax effectively aligns the embeddings across subjects in terms of the primary directions of variance, achieving a shared representation space that better captures common structural features. The high similarity across subjects reflects AAMax's ability to encode the principal semantic and structural patterns consistently, making it more robust and generalizable compared to the normal end-to-end method.

We also perform a similar analysis across datasets in Figure \ref{fig:eigenspace_things_common_aamax}. We compare the alignment between Subject 1 from NSD and Subject 1 from THINGS when the latter has been finetuned on a model that was pre-trained with the former. We present the difference between both training techniques. AAMax training generates better aligned embedding spaces.

\subsection{Mean Squared Error}\label{sec:mse}

We compare the end-to-end training approach with AAMax by evaluating the MSE between subjects using shared images after training as shown in Figure \ref{fig:mse_common_aamax}. In the end-to-end model, we observe relatively high MSE between subjects, indicating poor alignment in the common space. In contrast, the MSE for AAMax is an order of magnitude lower, demonstrating a much stronger alignment of subject-specific embeddings. This result suggests that AAMax significantly enhances alignment across subjects, making it far more effective than the normal approach in bringing embeddings closer together in the common space.

\begin{figure}[h]
    \centering
    \begin{subfigure}{0.5\textwidth}
        \centering
        \includegraphics[width=\textwidth]{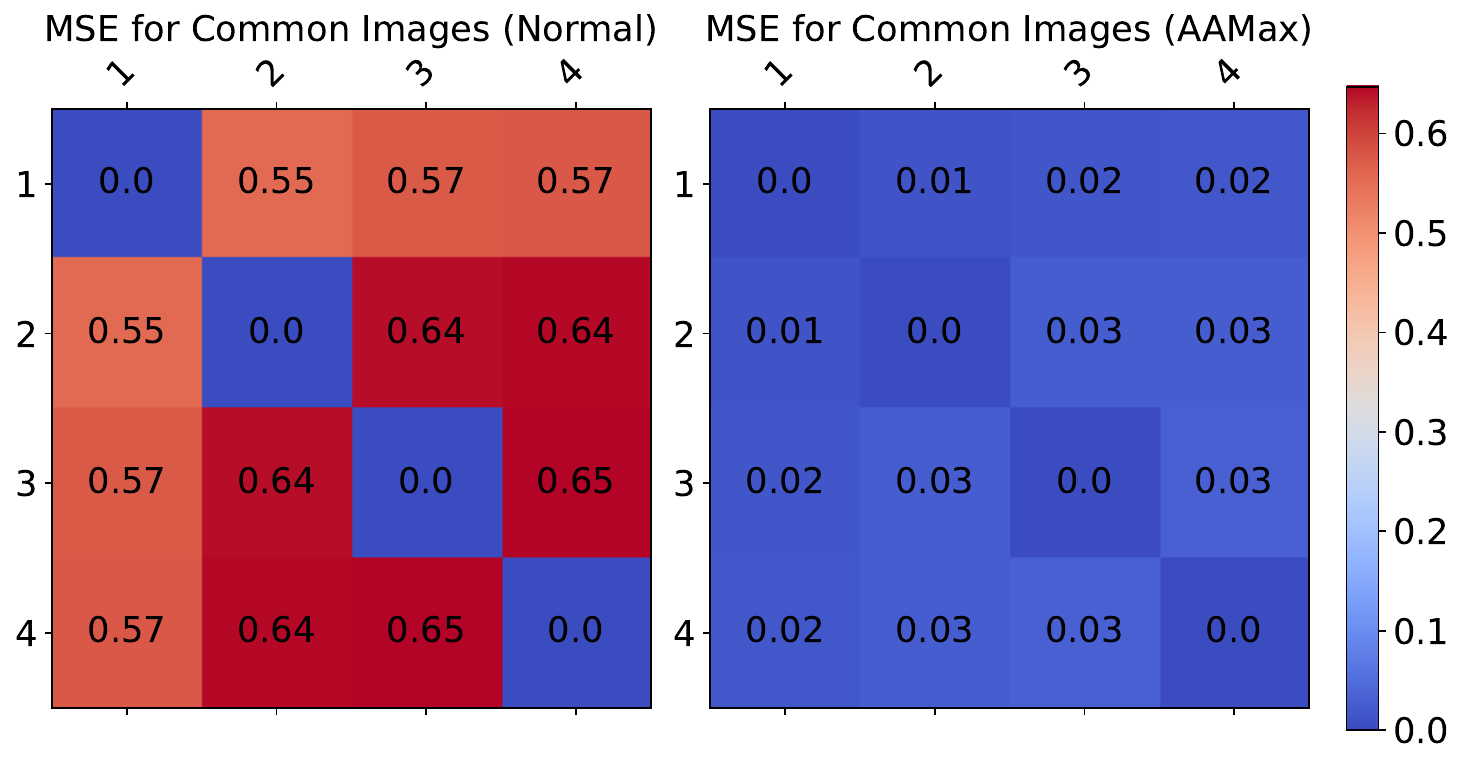}
    \end{subfigure}\hfill
    \caption{Comparing Mean Squared Error for the shared images across subjects in both training techniques. The MSE is compared after training is completed in both techniques. MSE for AAMax is an order of magnitude lower than the baseline}
    \label{fig:mse_common_aamax}
\end{figure}

\subsection{k-Nearest Neighbor Consistency}
We also evaluate the k-nearest neighbors (k-NN) consistency (Figure \ref{fig:knn_common_aamax}) between subjects for shared images after training, using 50 nearest neighbors as the metric. In the end-to-end model, the k-NN consistency is relatively low, indicating that the nearest neighbors of embeddings from one subject do not align well with those from another subject. However, with AAMax, the k-NN consistency significantly improves, demonstrating much better alignment of subject-specific embeddings. AAMax consistently outperforms the normal approach, highlighting its effectiveness in preserving semantic relationships across subjects in the common space.

\begin{figure}[h]
    \centering
    \begin{subfigure}{0.5\textwidth}
        \centering
        \includegraphics[width=\textwidth]{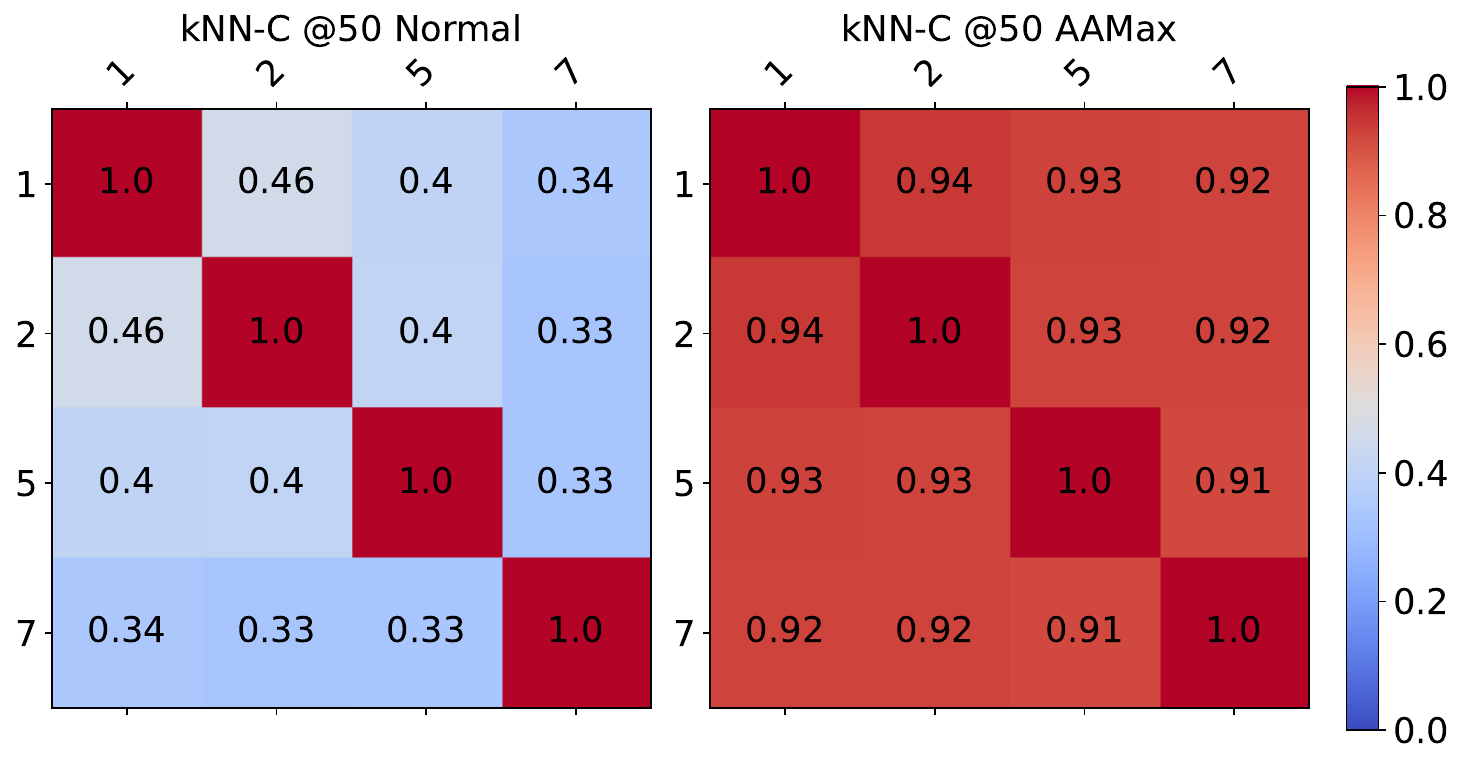}
    \end{subfigure}\hfill
    \caption{Comparing kNN consistency @ 50 for the shared images across subjects in both training techniques. The kNN-C is compared after training is completed in both techniques. AAMax significantly outperforms the baseline.}
    \label{fig:knn_common_aamax}
\end{figure}
\section{Emergence of Common Concepts: Additional Details}\label{app:common_concepts}


The existence of a semantic common representation space is suggested from the emergence of semantic categories in this space. For this, we extracted the most significant left singular vectors from the ridge regression weight matrix. This matrix was extracted from the pre-trained MindEye2 \citep{scotti2024mindeye2} model which was trained using the 1024-dimensional common space. Subsequently, we projected the 1024-dimensional embeddings of each image onto these vectors and sampled from the two extremes (set 1 and set 3) and the middle (set 2). In order to obtain an objective classification, we presented these sets of images to GPT-4o \citep{achiam2023gpt} and asked it to find the semantic categories that are well represented in set 1, somewhat represented in set 2 and not represented in set 3. Results for the first five singular vectors are shown in Table~\ref{tab:singular-concepts}. 
As is evident, images in the principal dimensions of the representation space have a common theme. 

\begin{figure}[!h] 
    \centering
    \includegraphics[width=0.5\textwidth]{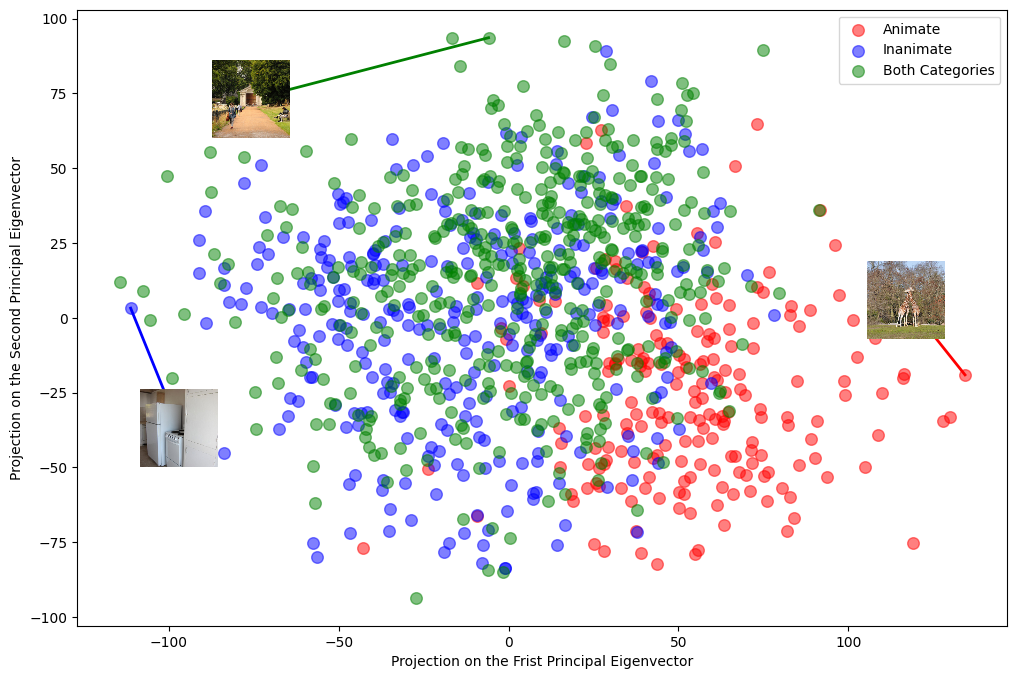} 
    \caption{Visualization of the Animate/Inanimate semantic category on the two principal singular vectors.  Each point (embedding) is colored based on whether its corresponding image contains `objects' that fall into the Animate category (red), Inanimate category (blue) or both categories (green).}
    \label{fig:embs}
\end{figure}

Previous studies, such as those by \cite{huth2012continuous}, have indicated that animal categories, including humans, are represented distinctly from non-animal categories within the brain. In our study, we observed that the common representational space also preserves these semantic properties. As demonstrated in Figure \ref{fig:embs}, after projecting the fMRI signals of Subject 2 into the common space and onto the first two singular vectors, we can see the animal vs. non-animal distinction, consistent with previous hypotheses. Extending this analysis across additional dimensions, we observed similar separations between other hypothesized semantic categories of "Social interaction," "man-made vs nature," and "biological vs non-biological."

\begin{table}[h]
\vskip 0.15in
\begin{center}
\caption{Concept analysis of images along principal dimensions.}
\vskip 0.1in
\begin{small}
\begin{sc}
\resizebox{\columnwidth}{!}{%
    \begin{tabular}{|c|c|}
    \hline
    Dimension 1 & Animals (giraffes, zebras, and elephants), nature, and outdoor scenes \\ \hline
    Dimension 2 & Cats and food \\ \hline
    Dimension 3 & Kite flying, airplanes, and flying objects \\ \hline
    Dimension 4 & Surfing, water sports, beach, and ocean scenes \\ \hline
    Dimension 5 & Airplanes, flying objects, outdoor activities and events, transportation scenes \\ \hline
    \end{tabular}
}
\end{sc}
\end{small}

\label{tab:singular-concepts}
\end{center}
\vskip -0.1in
\end{table}

To determine if dominant semantic categories were present within these sets, we employed ChatGPT with the query: ``Is there a dominant semantic category to which the majority of images in a set belong?'' The methodology for defining semantic categories was consistent with that used in \cite{huth2012continuous}. Table \ref{tab:singular-concepts} illustrates our findings.


After projecting the fMRI signals of individual images from Subject 2 into the common space and onto the first singular vectors, we observe that certain concepts are strongly represented along these dimensions defined by the singular vectors, aligning with previous hypotheses (Figures \ref{fig:first_ev}, \ref{fig:fourth_ev}).

\begin{figure}[h]
    \centering
    \begin{subfigure}{0.45\textwidth}
        \centering
        \includegraphics[width=\textwidth]{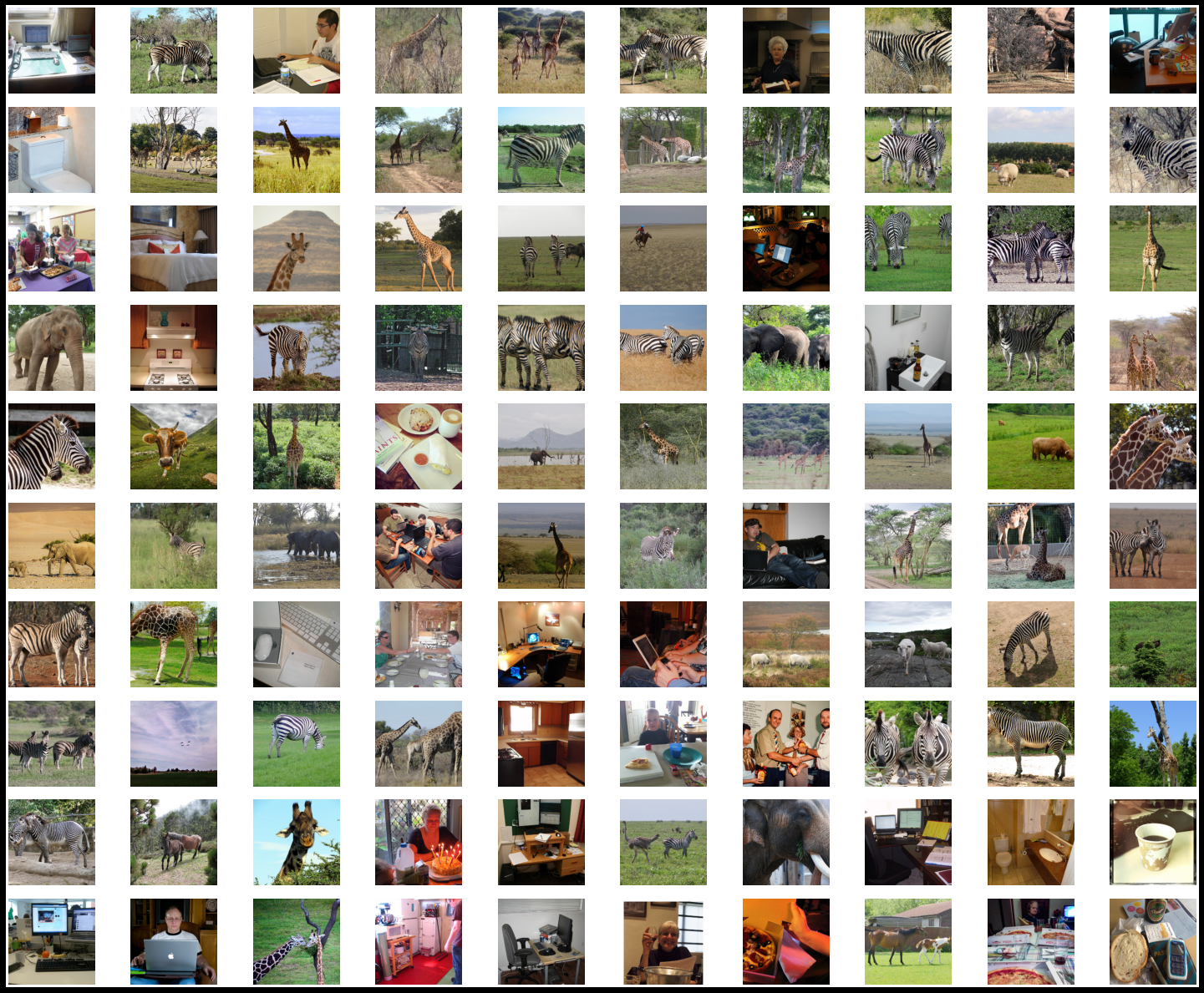}
    \end{subfigure}\hfill
    \begin{subfigure}{0.45\textwidth}
        \centering
        \includegraphics[width=\textwidth]{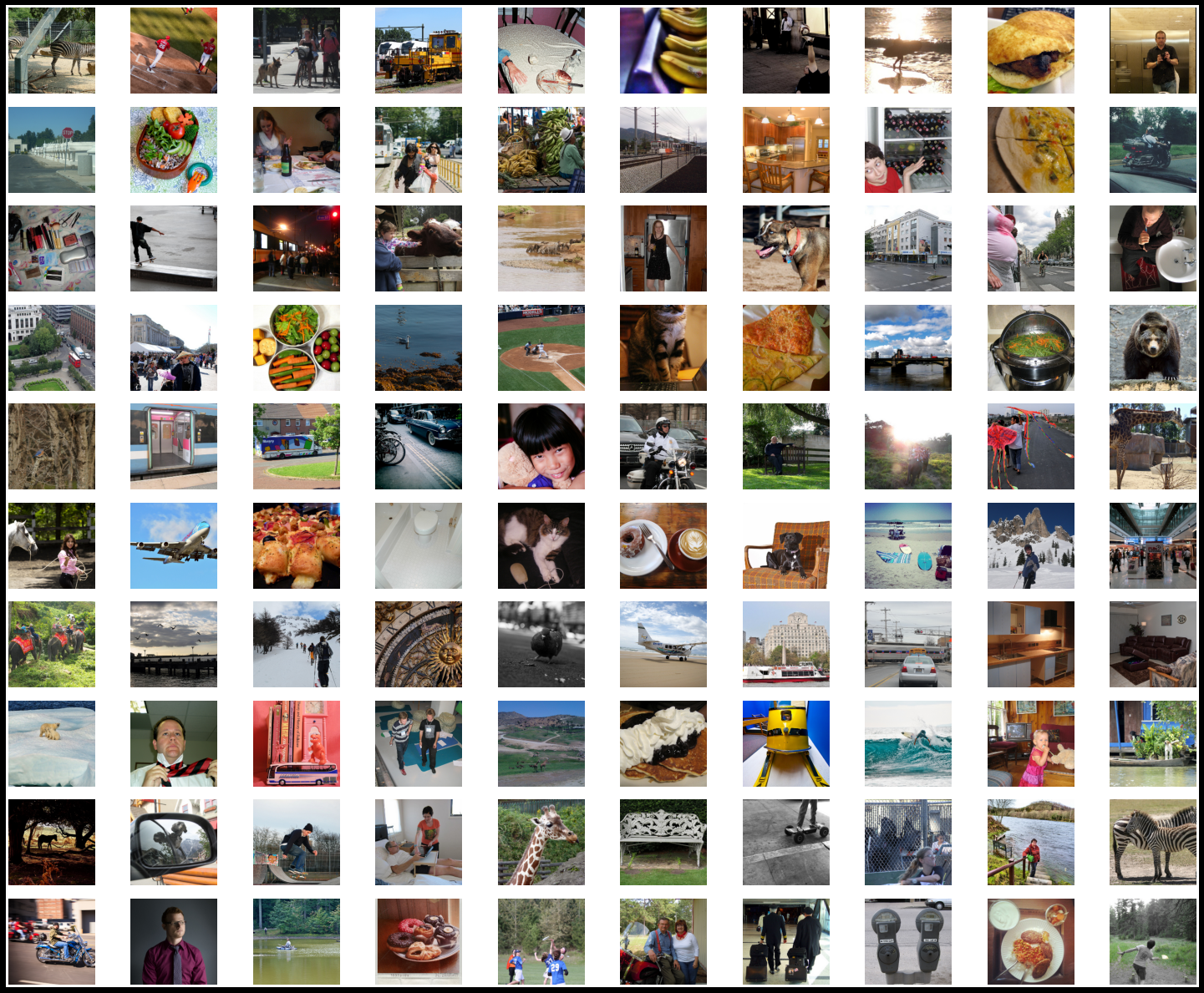}
    \end{subfigure}\hfill
    \caption{Comparison of images from Subject 2 with the highest and lowest projection values along the first principal left eigenvector. (a) Images whose corresponding 1024-dimensional embeddings yielded the highest projection values. (b) Images whose corresponding 1024-dimensional embeddings yielded the lowest projection values. The presence of the animate concept (e.g., giraffes, zebras, elephants) is prominent in the first set, while it fades in the second.}
    \label{fig:first_ev}
\end{figure}

\begin{figure}[h]
    \centering
    \begin{subfigure}{0.45\textwidth}
        \centering
        \includegraphics[width=\textwidth]{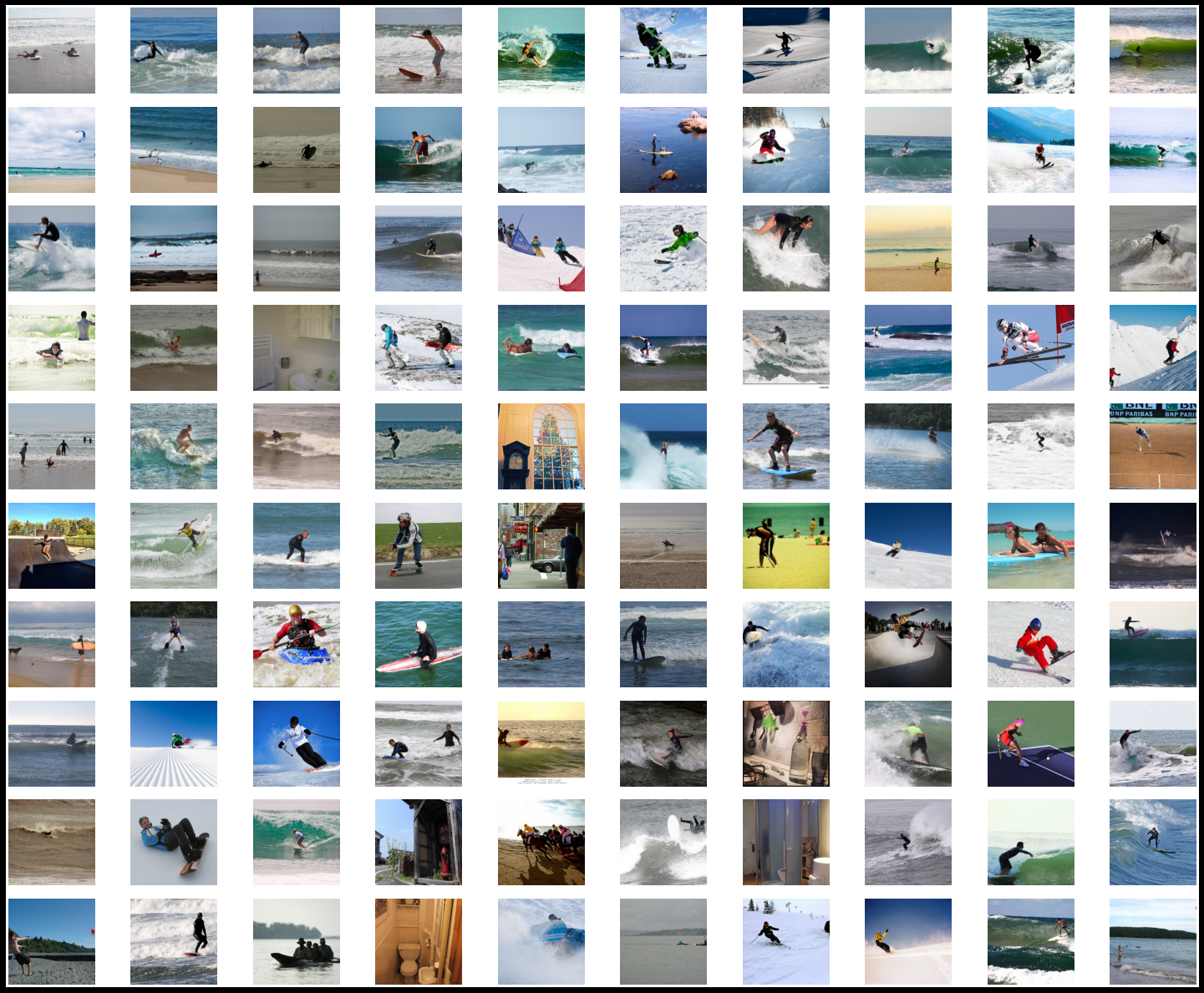}
    \end{subfigure}\hfill
    \begin{subfigure}{0.45\textwidth}
        \centering
        \includegraphics[width=\textwidth]{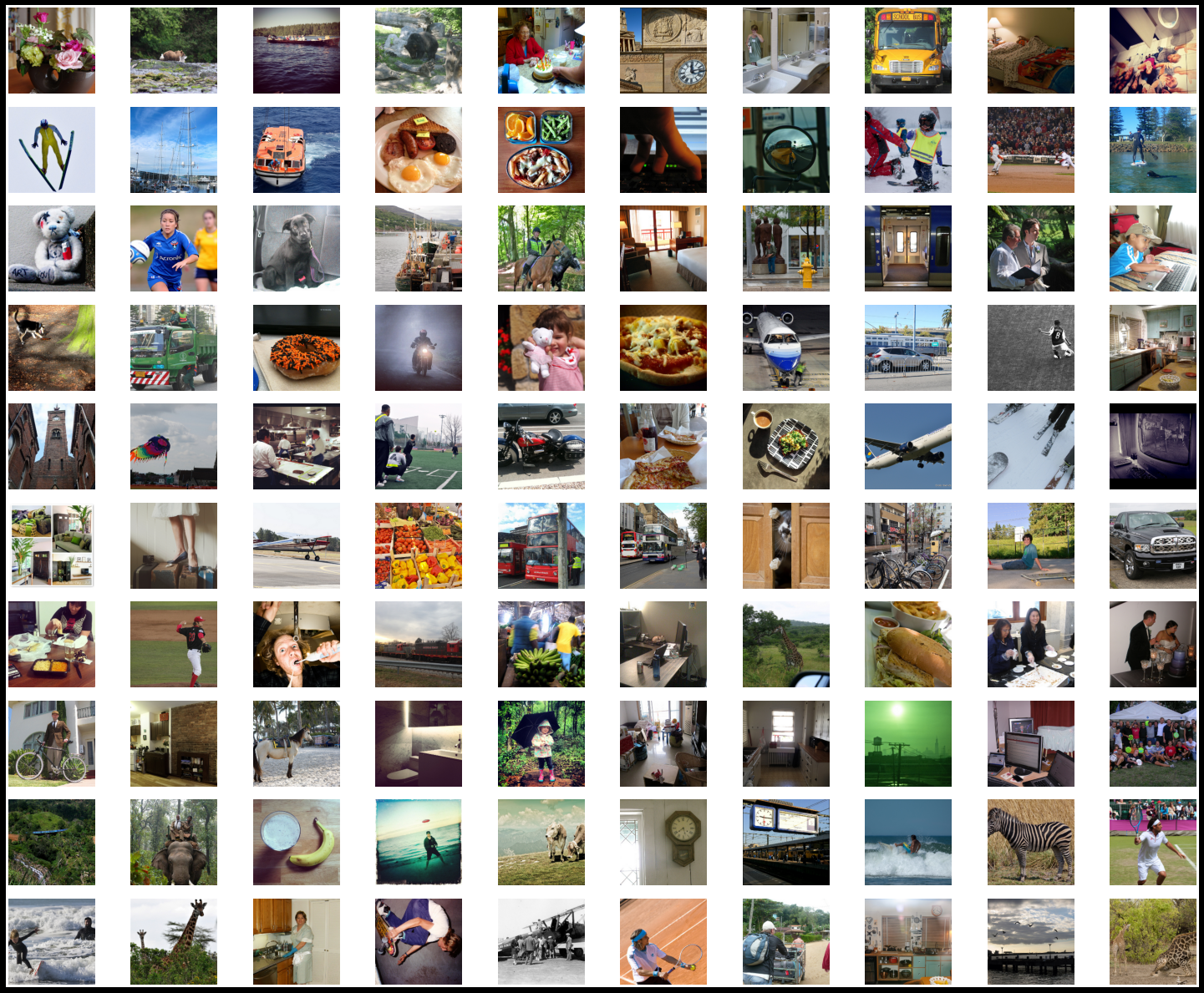}
    \end{subfigure}\hfill
    \caption{Comparison of images from Subject 2 with the highest and lowest projection values along the 4$^{th}$ principal left eigenvector. (a) Images whose corresponding 1024-dimensional embeddings yielded the highest projection values. (b) Images whose corresponding 1024-dimensional embeddings yielded the lowest projection values. Again the presence of common concepts (e.g. surfing, water sports, beach, and ocean scenes) is prominent in the first set, while it fades in the second.}
    \label{fig:fourth_ev}
\end{figure}

\section{NP-Hardness Proof}\label{app:nphardness}
We provide a formal proof that the bin-mapping problem is NP-hard by reduction from the Set Cover problem. The proof generalizes to an arbitrary number of bins per dimension and demonstrates the equivalence between a solution to the bin-mapping problem and a solution to the Set Cover problem.

\subsection{Problem Definition}
The bin-mapping problem is defined as follows:
\begin{definition}[Bin-Mapping Problem]
Given a set of $n$-dimensional vectors, each mapping to one of $m$ bins per dimension, and a target $k$, the decision problem asks whether there exists a subset of $k$ vectors that spans all $m$ bins in every dimension.
\end{definition}

\subsection{Reduction from Set Cover}
We reduce the well-known Set Cover problem, which is NP-hard, to the bin-mapping problem.

\begin{definition}[Set Cover Problem]
Let $\mathcal{U}$ be a finite universe, and let $\mathcal{S} = \{S_1, S_2, \dots, S_l\}$ be a collection of subsets of $\mathcal{U}$. The Set Cover problem asks whether there exists a subcollection $\mathcal{S}' \subseteq \mathcal{S}$ of size $k$ such that $\bigcup_{S \in \mathcal{S}'} S = \mathcal{U}$.
\end{definition}

\begin{theorem}
The bin-mapping problem is NP-hard.
\end{theorem}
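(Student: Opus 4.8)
\emph{Proof plan.} The plan is to give a polynomial-time many-one reduction from Set Cover to the bin-mapping decision problem, first in the case $m=2$ and then generalizing. Given a Set Cover instance $(\mathcal{U},\mathcal{S},k)$ with $\mathcal{U}=\{u_1,\dots,u_n\}$ and $\mathcal{S}=\{S_1,\dots,S_l\}$, I would build a bin-mapping instance with $n+1$ dimensions and $m=2$ bins (labelled $0$ and $1$) per dimension as follows. For each subset $S_i$ introduce a vector $v_i$ whose coordinate in dimension $j\le n$ is bin $1$ if $u_j\in S_i$ and bin $0$ otherwise, and whose coordinate in the auxiliary dimension $n+1$ is bin $1$. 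In addition introduce a single ``anchor'' vector $v_0$ that lies in bin $0$ in every dimension, dimension $n+1$ included. Finally set the target to $k+1$. The construction is clearly computable in polynomial time.

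\emph{Correctness.} The key observation is that having a selected vector in bin $1$ of dimension $j\le n$ is equivalent to the universe element $u_j$ being covered, so the ``bin $1$ half'' of the spanning requirement is exactly a Set Cover constraint, while the ``bin $0$ half'' is an artifact to be neutralized by the anchor. For the forward direction, if $\mathcal{S}'=\{S_{i_1},\dots,S_{i_k}\}$ covers $\mathcal{U}$, then $\{v_0,v_{i_1},\dots,v_{i_k}\}$ has size $k+1$ and spans every bin: in each dimension $j\le n$, bin $0$ is hit by $v_0$ and bin $1$ by whichever $v_{i_t}$ has $u_j\in S_{i_t}$; in dimension $n+1$, bin $0$ is hit by $v_0$ and bin $1$ by any $v_{i_t}$. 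For the reverse direction, let $T$ be a spanning subset of size $k+1$. Since $v_0$ is the unique vector lying in bin $0$ of dimension $n+1$, we must have $v_0\in T$; the remaining $k$ vectors of $T$, say $v_{i_1},\dots,v_{i_k}$, must between them hit bin $1$ in every dimension $j\le n$, which says exactly that $\{S_{i_1},\dots,S_{i_k}\}$ covers $\mathcal{U}$. Hence the two instances are equivalent, and NP-hardness of Set Cover transfers to the bin-mapping problem.

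\emph{Generalization to $m$ bins.} To allow an arbitrary number $m\ge 2$ of bins per dimension, I would keep every $v_i$ and $v_0$ restricted to bins $0$ and $1$ and add $m-2$ further anchors $w_2,\dots,w_{m-1}$, where $w_b$ lies in bin $b$ in every dimension and is the only vector placed in bin $b$ anywhere. Raising the target to $k+(m-1)$ forces all of $w_2,\dots,w_{m-1}$ into any spanning subset, for the same reason $v_0$ was forced, so the budget genuinely left over for the $v_i$'s is again exactly $k$, and the equivalence argument is unchanged.

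\emph{Main obstacle.} The one subtlety I would be most careful about is the bookkeeping around these ``spurious'' bins — bin $0$ of the element-dimensions and the extra bins $2,\dots,m-1$: I must ensure they can always be covered for free (so they do not eat into the budget encoding the set cover) and, conversely, cannot be exploited to simulate a smaller cover. The auxiliary dimension $n+1$, and its analogue for each $w_b$, is precisely the gadget that pins the anchor vectors into every feasible solution; verifying that this pinning is tight and that no other vector can substitute for an anchor is the crux of the reverse implication. Once that is established, everything else is routine.
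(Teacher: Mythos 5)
Your proposal is correct and follows the same high-level strategy as the paper: a polynomial-time reduction from Set Cover with one dimension per universe element, indicator vectors for the subsets, and anchor vectors whose only role is to occupy bins the subset vectors cannot be relied upon to fill. The gadgets differ, however. The paper stays in $n$ dimensions with three bins, placing the elements of $S_i$ in Bin 1 and the non-elements in Bin 2, and adds two global fillers $N_2,N_3$; you instead use two bins per dimension plus an auxiliary dimension $n+1$ whose bin $0$ is occupied only by your anchor $v_0$, with target $k+1$. Your variant buys a genuinely tighter forcing argument: $v_0$ is provably contained in every spanning set because it uniquely covers bin $0$ of dimension $n+1$, so exactly $k$ of the budget remains for subset vectors and the equivalence with covers of size $k$ is exact. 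In the paper's construction only $N_3$ is forced in this way, since Bin 2 of a dimension $d$ can also be covered by any chosen $V_i$ with $d\notin S_i$; the claim that $N_2$ is necessary is therefore not literally correct, and the cardinality bookkeeping is looser than in your version — your auxiliary-dimension trick is exactly what makes the reverse direction airtight. Your generalization to $m$ bins via the singleton anchors $w_2,\dots,w_{m-1}$ and target $k+(m-1)$ is likewise sound. The only nit is a degenerate case in the forward direction: covering bin $1$ of dimension $n+1$ requires at least one subset vector, so you should add the standard normalization $k\ge 1$ (equivalently $\mathcal{U}\neq\emptyset$), after which the argument is complete.
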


\begin{proof}
We reduce the Set Cover problem to the bin-mapping problem as follows:
\begin{itemize}
    \item Let $\mathcal{U} = \{1, 2, \dots, n\}$ represent the universe of elements.
    \item For each subset $S_i \in \mathcal{S}$, construct an $n$-dimensional vector $V_i$ such that:
    \begin{itemize}
        \item For every element $d \in S_i$, the vector $V_i$ places $d$ in $\text{Bin}_1$ in the corresponding dimension.
        \item For all $d \notin S_i$, the vector $V_i$ places $d$ in $\text{Bin}_2$.
    \end{itemize}
    \item Add two special vectors $N_2$ and $N_3$:
    \begin{itemize}
        \item $N_2$ covers $\text{Bin}_2$ in all dimensions.
        \item $N_3$ covers $\text{Bin}_3$ in all dimensions.
    \end{itemize}
\end{itemize}
 
\textbf{Key Properties of the Construction:}
\begin{enumerate}
    \item Each $V_i$ corresponds to a subset $S_i \in \mathcal{S}$.
    \item Any collection of input subsets $\mathcal{S}'$ leaves a hole in $\text{Bin}_2$ and $\text{Bin}_3$ in at least one dimension.
    \item The special vectors $N_2$ and $N_3$ are necessary to fill these holes.
    \item A set of vectors $\mathcal{V}' \subseteq \mathcal{V}$ forms a bin cover if and only if the corresponding subset $\mathcal{S}'$ forms a set cover.
\end{enumerate}

\textbf{Correctness:} A solution to the Set Cover problem maps directly to a solution to the Bin-Mapping problem:
\begin{itemize}
    \item If $\mathcal{S}'$ is a set cover, then $\mathcal{S}' \cup \{N_2, N_3\}$ forms a bin cover.
    \item Conversely, if $\mathcal{V}'$ is a bin cover, then $\mathcal{V}' \setminus \{N_2, N_3\}$ corresponds to a set cover.
\end{itemize}


Since the Set Cover problem is NP-hard, and we have reduced it to the bin-mapping problem, the bin-mapping problem is also NP-hard.
\end{proof}

\subsection{Proof of Approximation Ratio}\label{app:approxproof}
In this section, we prove that the gap function \(\text{Gap}(S)\), minimized by our greedy heuristic, satisfies the submodularity property. Submodularity guarantees the diminishing returns property, which forms the basis for the \(1 - \frac{1}{e}\) approximation ratio achieved by the greedy algorithm for submodular optimization ~\citep{fisher1978analysis}.

\subsubsection{Gap Function Definition}
The gap function measures the total number of uncovered bins across all eigenvector dimensions:
\[
\text{Gap}(S) = \sum_{j=1}^d \text{Gap}(S, j),
\]
where \(\text{Gap}(S, j)\) is the number of bins in dimension \(j\) not covered by the subset \(S\). A bin in dimension \(j\) is covered if any image \(i \in S\) falls into that bin.

\subsubsection{Submodularity Property}
A function \(f(S)\) is submodular if it satisfies the diminishing returns property ~\citep{schrijver2003combinatorial}:
\[
f(A \cup \{x\}) - f(A) \geq f(B \cup \{x\}) - f(B), \quad \forall A \subseteq B, \; x \notin B.
\]
For the gap function, this translates to:
\[
\text{Gap}(A \cup \{x\}) - \text{Gap}(A) \geq \text{Gap}(B \cup \{x\}) - \text{Gap}(B),
\]
where \(A \subseteq B\) and \(x \notin B\).

\subsubsection{Proof of Submodularity}
Let us consider a single dimension \(j\) and define \(\Delta_A(x, j)\) as the number of bins in \(j\) covered by \(x\) but not by \(A\):
\[
\Delta_A(x, j) = \text{Gap}(A, j) - \text{Gap}(A \cup \{x\}, j).
\]
Adding an image \(x\) to a subset \(A\) can only cover bins that are not already covered. Since \(A \subseteq B\), all bins covered by \(A\) are also covered by \(B\). Therefore, the additional bins covered by \(x\) when added to \(A\) are at least as many as those covered when adding \(x\) to \(B\):
\[
\Delta_A(x, j) \geq \Delta_B(x, j), \quad \forall j.
\]
Summing over all dimensions, the total reduction in the gap function satisfies:
\[
\text{Gap}(A \cup \{x\}) - \text{Gap}(A) = \sum_{j=1}^d \Delta_A(x, j),
\]
\[
\text{Gap}(B \cup \{x\}) - \text{Gap}(B) = \sum_{j=1}^d \Delta_B(x, j).
\]
Since \(\Delta_A(x, j) \geq \Delta_B(x, j)\) for all \(j\), we conclude:
\[
\text{Gap}(A \cup \{x\}) - \text{Gap}(A) \geq \text{Gap}(B \cup \{x\}) - \text{Gap}(B).
\]
Thus, \(\text{Gap}(S)\) satisfies the submodularity property.

\subsubsection{Monotonicity of \(\text{Gap}(S)\)}
The gap function \(\text{Gap}(S)\) is also monotone non-increasing because adding more elements to \(S\) can only reduce (or leave unchanged) the number of uncovered bins:
\[
\text{Gap}(A \cup \{x\}) \leq \text{Gap}(A).
\]

\subsubsection{Conclusion}
Since \(\text{Gap}(S)\) is both submodular and monotone, the greedy algorithm applied to minimize the gap function achieves a \(1 - \frac{1}{e}\) approximation ratio ~\citep{fisher1978analysis}.
\subsection{Greedy Image Selection Algorithm}\label{app:alg}
Algorithm \ref{algo:greedyalgo} presents the greedy heuristic for bin-coverage that is utilized in the data selection method.

\begin{algorithm}
\caption{Greedy Heuristic for Bin Coverage}
\label{algo:greedyalgo}
\begin{algorithmic}[1]
\REQUIRE Set of images $\mathcal{I}$, parameter $w$
\ENSURE $S$: Subset of images covering all bins
\STATE Initialize $S \gets \emptyset$
\STATE Define $Gap(S, j)$: Number of empty bins in dimension $j$
\STATE Define $Gap(S) \gets \sum_{j} Gap(S, j)$
\FOR{each dimension $j = 1, 2, \ldots, d$}
    \STATE Compute the number of bins $B_j = \left\lfloor w \cdot \frac{\lambda_j}{\lambda_1} \right\rfloor$
\ENDFOR
\WHILE{$Gap(S) > 0$}
    \STATE Select $i \in \mathcal{I} \setminus S$ minimizing $Gap(S \cup \{i\})$
    \STATE Update $S \gets S \cup \{i\}$
\ENDWHILE
\end{algorithmic}
\end{algorithm}

\section{Training Time}\label{app:tr_time}
To show the efficiency of adapter alignment we computed the run time per epoch for each step of the training process. We show the AAMax takes very little extra train time to perform step 1 of the process and End-to-end fine-tuning has essentially the exact same runtime between the baseline method and AAMax. Our results are presented in Table \ref{tab:training_times}.
\begin{table}[ht]
\centering
\caption{Comparison of Training Times for Baseline and AAMax Training Methods. Numbers are averaged over all 3 fine tune subjects (2,5,7). The Adapters are first allowed to overfit by training them upto 400 epochs using MSE loss. This takes around a minute. Then end-to-end alignment is performed. It is important to note that besides the addition of the MSE Loss at the adapter level, both pipelines are running the same architecture and data and performance is thus very similar. The final time given is the time to train up to 160 epochs. But AAMax training can outperform 160 epochs of Baseline training with just one epoch of end-to-end alignment.}
\vskip 0.1in
\begin{tabular}{@{}lcccc@{}}
\toprule
\multirow{2}{*}{Training Method} & \multirow{2}{*}{\shortstack{Adapter Alignment \\ (Total Time)}} & \multicolumn{2}{c}{End-to-End Fine tuning} & \multirow{2}{*}{\shortstack{Total Time \\ to Train}} \\ 
\cmidrule(lr){3-4}
                                 &                                                           & Time per Epoch      & Total Time (160 epochs)        &                                          \\ 
\midrule
Baseline                  & -                                                  & 221.4 seconds            & 590.4 minutes            & 590.4 minutes                             \\
AAMax Training                   & 67 seconds                                         & 223.61 seconds            &  596.21 minutes           & 597.4 minutes                              \\ 
\bottomrule
\end{tabular}

\label{tab:training_times}
\end{table}
\section{High-Level Pipeline Training Process}\label{app:hlp_breakdown}

The high-level pipeline aims to map input fMRI signals to the CLIP embedding space. The training process differs based on whether the subject is the reference subject or a new subject being aligned to the reference space.

\subsection{Training the Reference Subject}

For the reference subject, the loss function combines Diffusion Prior Loss and CLIP Loss. The overall loss for the reference subject is:
\[
\mathcal{L}_{\text{ref}} = \lambda_1 \mathcal{L}_{\text{Prior}} + \lambda_2 \mathcal{L}_{\text{CLIP}}
\]
where \(\lambda_1\) and \(\lambda_2\) are weighting coefficients for the respective loss components.

\subsection{Training a New Subject}

For a new subject, the training involves two stages:

\paragraph{Stage 1: Adapter-Level Alignment.}
Initially, the adapter is trained using the Mean Squared Error (MSE) loss:
\[
\mathcal{L}_{\text{adapter}} = \mathcal{L}_{\text{MSE}}(\mathbf{z}_{\text{adapter, new}}, \mathbf{z}_{\text{adapter, ref}})
\]
where \(\mathbf{z}_{\text{adapter, new}}\) and \(\mathbf{z}_{\text{adapter, ref}}\) are the adapter-level embeddings for the new and reference subjects, respectively, on the shared common images.

\paragraph{Stage 2: End-to-End Training.}
After aligning the adapter, the entire pipeline is trained end-to-end. During this stage:
\begin{itemize}
    \item For non-common images, Prior Loss and CLIP Loss are applied to align the new subject’s output embeddings to the CLIP space.
    \item For common images, the MSE Loss is applied at the adapter level to retain the alignment to the reference subject as much as possible.
\end{itemize}

The overall loss function in this stage is:
\[
\mathcal{L}_{\text{new}} = 
\begin{cases} 
\lambda_1 \mathcal{L}_{\text{Prior}} + \lambda_2 \mathcal{L}_{\text{CLIP}}, & \text{for non-common images} \\
\lambda_1 \mathcal{L}_{\text{Prior}} + \lambda_2 \mathcal{L}_{\text{CLIP}} + \lambda_3 \mathcal{L}_{\text{MSE}}(\mathbf{z}_{\text{new}}, \mathbf{z}_{\text{ref}}), & \text{for common images}
\end{cases}
\]
where \(\mathbf{z}_{\text{adapter, new}}\) and \(\mathbf{z}_{\text{adapter, ref}}\) are the adapter-level embeddings for the new and reference subjects, respectively, on the shared common images.

\section{Low-Level Pipeline Training Process}\label{app:llp_breakdown}

The low-level pipeline aims to map input fMRI signals to the latent space of an Autoencoder (AE), which is trained to compress the NSD train images into a 1024-dimensional latent representation and reconstruct them. The training process differs based on whether the subject is the reference subject or a new subject being aligned to the reference space.

\subsection{Training the Reference Subject}

For the reference subject, the goal is to map the fMRI signals to the AE latent space using the Mean Squared Error (MSE) loss. The AE latent representations of the corresponding images serve as the ground truth.

The loss function for the reference subject is:
\[
\mathcal{L}_{\text{ref}} = \mathcal{L}_{\text{MSE}}(\mathbf{z}_{\text{fMRI}}, \mathbf{z}_{\text{AE}})
\]
where \(\mathbf{z}_{\text{fMRI}}\) is the mapped output embedding from fMRI signals, and \(\mathbf{z}_{\text{AE}}\) is the AE latent space representation of the corresponding image.

\subsection{Training a New Subject}

For a new subject, the training involves two stages:

\paragraph{Stage 1: Adapter-Level Alignment.}
In this stage, only the adapter is trained using MSE Loss to align the new subject’s adapter-level embeddings with those of the reference subject for shared common images:
\[
\mathcal{L}_{\text{adapter}} = \mathcal{L}_{\text{MSE}}(\mathbf{z}_{\text{adapter, new}}, \mathbf{z}_{\text{adapter, ref}})
\]
where \(\mathbf{z}_{\text{adapter, new}}\) and \(\mathbf{z}_{\text{adapter, ref}}\) are the adapter-level embeddings for the new and reference subjects, respectively.

\paragraph{Stage 2: End-to-End Training.}
In this stage, the entire pipeline is trained end-to-end. MSE Loss is applied at 2 points in the pipeline:
\begin{itemize}
    \item At the adapter level, to maintain alignment between the new subject’s embeddings and the reference subject’s embeddings for common images.
    \item At the mapper output level, to ensure the new subject’s fMRI embeddings align with the AE latent representations for corresponding images.
\end{itemize}

The loss function for this stage is:
\[
\mathcal{L}_{\text{new}} = 
\begin{cases} 
\mathcal{L}_{\text{MSE}}(\mathbf{z}_{\text{fMRI, new}}, \mathbf{z}_{\text{AE}}), & \text{for non-common images} \\
\mathcal{L}_{\text{MSE}}(\mathbf{z}_{\text{fMRI, new}}, \mathbf{z}_{\text{AE}}) + \mathcal{L}_{\text{MSE}}(\mathbf{z}_{\text{adapter, new}}, \mathbf{z}_{\text{adapter, ref}}), & \text{for common images}
\end{cases}
\]
where:
\begin{itemize}
    \item \(\mathbf{z}_{\text{fMRI, new}}\): Output embedding for the new subject.
    \item \(\mathbf{z}_{\text{AE}}\): AE latent space representation of the image.
    \item \(\mathbf{z}_{\text{adapter, new}}\) and \(\mathbf{z}_{\text{adapter, ref}}\): Adapter-level embeddings for the new and reference subjects, respectively.
\end{itemize}
\clearpage
\section{Reconstruction Results}\label{app:recons}
\begin{figure}[h]
    \centering
    \begin{subfigure}{0.50\textwidth}
        \centering
        \includegraphics[width=\textwidth]{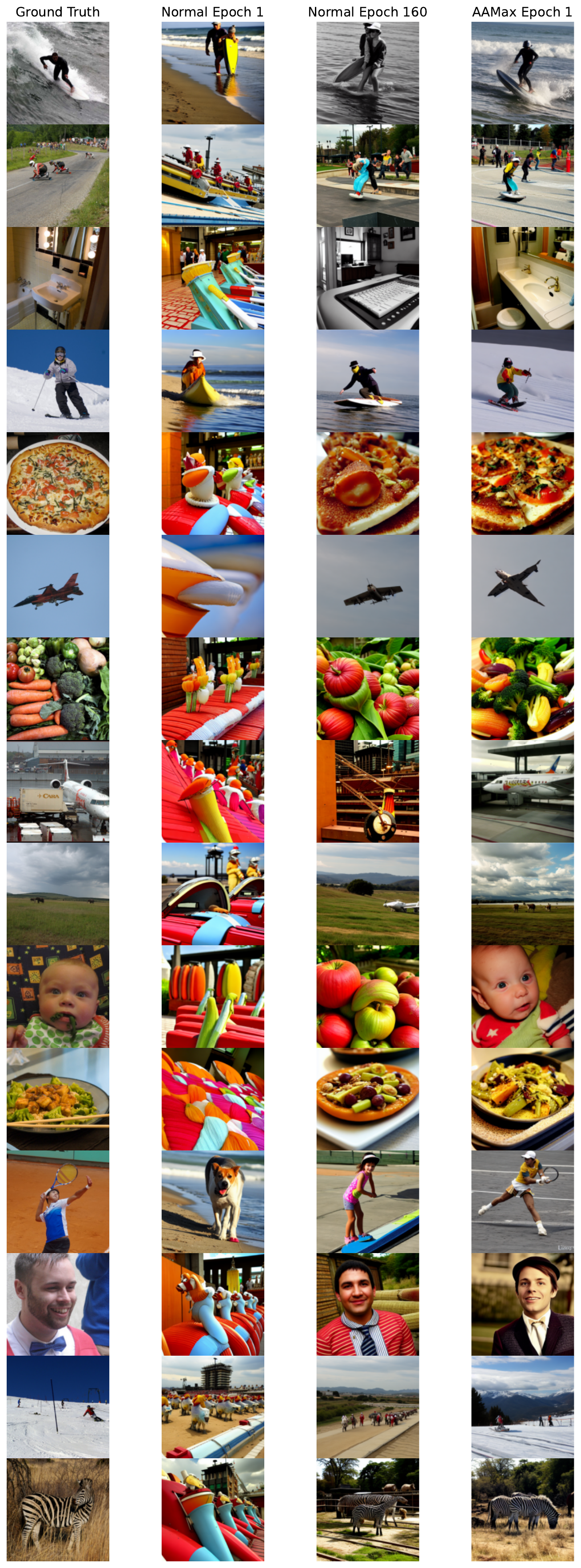}
    \end{subfigure}\hfill
    \caption{Reconstruction results after fine-tuning on 250 common images}
    \label{fig:recons}
\end{figure}





\begin{thebibliography}{49}
\providecommand{\natexlab}[1]{#1}
\providecommand{\url}[1]{\texttt{#1}}
\expandafter\ifx\csname urlstyle\endcsname\relax
  \providecommand{\doi}[1]{doi: #1}\else
  \providecommand{\doi}{doi: \begingroup \urlstyle{rm}\Url}\fi

\bibitem[Achiam et~al.(2023)Achiam, Adler, Agarwal, Ahmad, Akkaya, Aleman, Almeida, Altenschmidt, Altman, Anadkat, et~al.]{achiam2023gpt}
Josh Achiam, Steven Adler, Sandhini Agarwal, Lama Ahmad, Ilge Akkaya, Florencia~Leoni Aleman, Diogo Almeida, Janko Altenschmidt, Sam Altman, Shyamal Anadkat, et~al.
\newblock Gpt-4 technical report.
\newblock \emph{arXiv preprint arXiv:2303.08774}, 2023.

\bibitem[Allen et~al.(2022)Allen, St-Yves, Wu, Breedlove, Prince, Dowdle, Nau, Caron, Pestilli, Charest, et~al.]{allen2022massive}
Emily~J Allen, Ghislain St-Yves, Yihan Wu, Jesse~L Breedlove, Jacob~S Prince, Logan~T Dowdle, Matthias Nau, Brad Caron, Franco Pestilli, Ian Charest, et~al.
\newblock A massive 7t fmri dataset to bridge cognitive neuroscience and artificial intelligence.
\newblock \emph{Nature neuroscience}, 25\penalty0 (1):\penalty0 116--126, 2022.

\bibitem[Bao et~al.(2020)Bao, She, McGill, and Tsao]{bao2020map}
Pinglei Bao, Liang She, Mason McGill, and Doris~Y Tsao.
\newblock A map of object space in primate inferotemporal cortex.
\newblock \emph{Nature}, 583\penalty0 (7814):\penalty0 103--108, 2020.

\bibitem[Bazeille et~al.(2021)Bazeille, DuPre, Richard, Poline, and Thirion]{bazeille2021empirical}
Thomas Bazeille, Elizabeth DuPre, Hugo Richard, Jean-Baptiste Poline, and Bertrand Thirion.
\newblock An empirical evaluation of functional alignment using inter-subject decoding.
\newblock \emph{NeuroImage}, 245:\penalty0 118683, 2021.

\bibitem[Beliy et~al.(2019)Beliy, Gaziv, Hoogi, Strappini, Golan, and Irani]{beliy2019voxels}
Roman Beliy, Guy Gaziv, Assaf Hoogi, Francesca Strappini, Tal Golan, and Michal Irani.
\newblock From voxels to pixels and back: Self-supervision in natural-image reconstruction from fmri.
\newblock \emph{Advances in Neural Information Processing Systems}, 32, 2019.

\bibitem[Caron et~al.(2020)Caron, Misra, Mairal, Goyal, Bojanowski, and Joulin]{caron2020unsupervised}
Mathilde Caron, Ishan Misra, Julien Mairal, Priya Goyal, Piotr Bojanowski, and Armand Joulin.
\newblock Unsupervised learning of visual features by contrasting cluster assignments.
\newblock \emph{Advances in neural information processing systems}, 33:\penalty0 9912--9924, 2020.

\bibitem[Christophel et~al.(2012)Christophel, Hebart, and Haynes]{christophel2012decoding}
Thomas~B Christophel, Martin~N Hebart, and John-Dylan Haynes.
\newblock Decoding the contents of visual short-term memory from human visual and parietal cortex.
\newblock \emph{Journal of Neuroscience}, 32\penalty0 (38):\penalty0 12983--12989, 2012.

\bibitem[Ester et~al.(2013)Ester, Anderson, Serences, and Awh]{ester2013neural}
Edward~F Ester, David~E Anderson, John~T Serences, and Edward Awh.
\newblock A neural measure of precision in visual working memory.
\newblock \emph{Journal of cognitive neuroscience}, 25\penalty0 (5):\penalty0 754--761, 2013.

\bibitem[Ferrante et~al.(2024)Ferrante, Boccato, Ozcelik, VanRullen, and Toschi]{ferrante2024through}
Matteo Ferrante, Tommaso Boccato, Furkan Ozcelik, Rufin VanRullen, and Nicola Toschi.
\newblock Through their eyes: multi-subject brain decoding with simple alignment techniques.
\newblock \emph{Imaging Neuroscience}, 2:\penalty0 1--21, 2024.

\bibitem[Fisher et~al.(1978)Fisher, Nemhauser, and Wolsey]{fisher1978analysis}
Marshall~L Fisher, George~L Nemhauser, and Laurence~A Wolsey.
\newblock \emph{An analysis of approximations for maximizing submodular set functions—II}.
\newblock Springer, 1978.

\bibitem[Gaziv et~al.(2022)Gaziv, Beliy, Granot, Hoogi, Strappini, Golan, and Irani]{gaziv2022self}
Guy Gaziv, Roman Beliy, Niv Granot, Assaf Hoogi, Francesca Strappini, Tal Golan, and Michal Irani.
\newblock Self-supervised natural image reconstruction and large-scale semantic classification from brain activity.
\newblock \emph{NeuroImage}, 254:\penalty0 119121, 2022.

\bibitem[Harrison and Tong(2009)]{harrison2009decoding}
Stephenie~A Harrison and Frank Tong.
\newblock Decoding reveals the contents of visual working memory in early visual areas.
\newblock \emph{Nature}, 458\penalty0 (7238):\penalty0 632--635, 2009.

\bibitem[Hebart et~al.(2023)Hebart, Contier, Teichmann, Rockter, Zheng, Kidder, Corriveau, Vaziri-Pashkam, and Baker]{hebart2023things}
Martin~N Hebart, Oliver Contier, Lina Teichmann, Adam~H Rockter, Charles~Y Zheng, Alexis Kidder, Anna Corriveau, Maryam Vaziri-Pashkam, and Chris~I Baker.
\newblock Things-data, a multimodal collection of large-scale datasets for investigating object representations in human brain and behavior.
\newblock \emph{Elife}, 12:\penalty0 e82580, 2023.

\bibitem[Huth et~al.(2012)Huth, Nishimoto, Vu, and Gallant]{huth2012continuous}
Alexander~G Huth, Shinji Nishimoto, An~T Vu, and Jack~L Gallant.
\newblock A continuous semantic space describes the representation of thousands of object and action categories across the human brain.
\newblock \emph{Neuron}, 76\penalty0 (6):\penalty0 1210--1224, 2012.

\bibitem[Itthipuripat et~al.(2019)Itthipuripat, Sprague, and Serences]{itthipuripat2019functional}
Sirawaj Itthipuripat, Thomas~C Sprague, and John~T Serences.
\newblock Functional mri and eeg index complementary attentional modulations.
\newblock \emph{Journal of Neuroscience}, 39\penalty0 (31):\penalty0 6162--6179, 2019.

\bibitem[Kamitani and Tong(2005)]{kamitani2005decoding}
Yukiyasu Kamitani and Frank Tong.
\newblock Decoding the visual and subjective contents of the human brain.
\newblock \emph{Nature neuroscience}, 8\penalty0 (5):\penalty0 679--685, 2005.

\bibitem[Li et~al.(2024)Li, Wei, Li, Zou, Qin, and Liu]{li2024visual}
Dongyang Li, Chen Wei, Shiying Li, Jiachen Zou, Haoyang Qin, and Quanying Liu.
\newblock Visual decoding and reconstruction via eeg embeddings with guided diffusion.
\newblock \emph{arXiv preprint arXiv:2403.07721}, 2024.

\bibitem[Li et~al.(2021)Li, Sprague, Yoo, Ma, and Curtis]{li2021joint}
Hsin-Hung Li, Thomas~C Sprague, Aspen~H Yoo, Wei~Ji Ma, and Clayton~E Curtis.
\newblock Joint representation of working memory and uncertainty in human cortex.
\newblock \emph{Neuron}, 109\penalty0 (22):\penalty0 3699--3712, 2021.

\bibitem[Lin et~al.(2022)Lin, Sprague, and Singh]{lin2022mind}
Sikun Lin, Thomas Sprague, and Ambuj~K Singh.
\newblock Mind reader: Reconstructing complex images from brain activities.
\newblock \emph{Advances in Neural Information Processing Systems}, 35:\penalty0 29624--29636, 2022.

\bibitem[Liu et~al.(2024)Liu, Ma, Zhu, Jing, and Zheng]{liu2024mindslearningtransferableneural}
Yulong Liu, Yongqiang Ma, Guibo Zhu, Haodong Jing, and Nanning Zheng.
\newblock See through their minds: Learning transferable neural representation from cross-subject fmri, 2024.
\newblock URL \url{https://arxiv.org/abs/2403.06361}.

\bibitem[Long et~al.(2018)Long, Yu, and Konkle]{long2018mid}
Bria Long, Chen-Ping Yu, and Talia Konkle.
\newblock Mid-level visual features underlie the high-level categorical organization of the ventral stream.
\newblock \emph{Proceedings of the National Academy of Sciences}, 115\penalty0 (38):\penalty0 E9015--E9024, 2018.

\bibitem[Lu et~al.(2023)Lu, Du, Zhou, Wang, and He]{lu2023minddiffuser}
Yizhuo Lu, Changde Du, Qiongyi Zhou, Dianpeng Wang, and Huiguang He.
\newblock Minddiffuser: Controlled image reconstruction from human brain activity with semantic and structural diffusion.
\newblock In \emph{Proceedings of the 31st ACM International Conference on Multimedia}, pages 5899--5908, 2023.

\bibitem[Mozafari et~al.(2020)Mozafari, Reddy, and VanRullen]{mozafari2020reconstructing}
Milad Mozafari, Leila Reddy, and Rufin VanRullen.
\newblock Reconstructing natural scenes from fmri patterns using bigbigan.
\newblock In \emph{2020 International joint conference on neural networks (IJCNN)}, pages 1--8. IEEE, 2020.

\bibitem[Naselaris et~al.(2011)Naselaris, Kay, Nishimoto, and Gallant]{naselaris2011encoding}
Thomas Naselaris, Kendrick~N Kay, Shinji Nishimoto, and Jack~L Gallant.
\newblock Encoding and decoding in fmri.
\newblock \emph{Neuroimage}, 56\penalty0 (2):\penalty0 400--410, 2011.

\bibitem[Qian et~al.(2023)Qian, Wang, Huo, Feng, and Fu]{qian2023fmri}
Xuelin Qian, Yun Wang, Jingyang Huo, Jianfeng Feng, and Yanwei Fu.
\newblock fmri-pte: A large-scale fmri pretrained transformer encoder for multi-subject brain activity decoding.
\newblock \emph{arXiv preprint arXiv:2311.00342}, 2023.

\bibitem[Radford et~al.(2021)Radford, Kim, Hallacy, Ramesh, Goh, Agarwal, Sastry, Askell, Mishkin, Clark, et~al.]{radford2021learning}
Alec Radford, Jong~Wook Kim, Chris Hallacy, Aditya Ramesh, Gabriel Goh, Sandhini Agarwal, Girish Sastry, Amanda Askell, Pamela Mishkin, Jack Clark, et~al.
\newblock Learning transferable visual models from natural language supervision.
\newblock In \emph{International conference on machine learning}, pages 8748--8763. PMLR, 2021.

\bibitem[Ramesh et~al.(2022)Ramesh, Dhariwal, Nichol, Chu, and Chen]{ramesh2022hierarchical}
Aditya Ramesh, Prafulla Dhariwal, Alex Nichol, Casey Chu, and Mark Chen.
\newblock Hierarchical text-conditional image generation with clip latents.
\newblock \emph{arXiv preprint arXiv:2204.06125}, 1\penalty0 (2):\penalty0 3, 2022.

\bibitem[Ren et~al.(2021)Ren, Li, Xue, Li, Yang, Jiao, and Gao]{ren2021reconstructing}
Ziqi Ren, Jie Li, Xuetong Xue, Xin Li, Fan Yang, Zhicheng Jiao, and Xinbo Gao.
\newblock Reconstructing seen image from brain activity by visually-guided cognitive representation and adversarial learning.
\newblock \emph{NeuroImage}, 228:\penalty0 117602, 2021.

\bibitem[Sch{\"o}nemann(1966)]{schonemann1966generalized}
Peter~H. Sch{\"o}nemann.
\newblock A generalized solution of the orthogonal procrustes problem.
\newblock \emph{Psychometrika}, 31\penalty0 (1):\penalty0 1--10, 1966.

\bibitem[Schrijver et~al.(2003)]{schrijver2003combinatorial}
Alexander Schrijver et~al.
\newblock \emph{Combinatorial optimization: polyhedra and efficiency}, volume~24.
\newblock Springer, 2003.

\bibitem[Scolari et~al.(2012)Scolari, Byers, and Serences]{scolari2012optimal}
Miranda Scolari, Anna Byers, and John~T Serences.
\newblock Optimal deployment of attentional gain during fine discriminations.
\newblock \emph{Journal of Neuroscience}, 32\penalty0 (22):\penalty0 7723--7733, 2012.

\bibitem[Scotti et~al.(2023)Scotti, Banerjee, Goode, Shabalin, Nguyen, Dempster, Verlinde, Yundler, Weisberg, Norman, et~al.]{scotti2024reconstructing}
Paul Scotti, Atmadeep Banerjee, Jimmie Goode, Stepan Shabalin, Alex Nguyen, Aidan Dempster, Nathalie Verlinde, Elad Yundler, David Weisberg, Kenneth Norman, et~al.
\newblock Reconstructing the mind's eye: fmri-to-image with contrastive learning and diffusion priors.
\newblock \emph{Advances in Neural Information Processing Systems}, 37, 2023.

\bibitem[Scotti et~al.(2024)Scotti, Tripathy, Villanueva, Kneeland, Chen, Narang, Santhirasegaran, Xu, Naselaris, Norman, et~al.]{scotti2024mindeye2}
Paul~S Scotti, Mihir Tripathy, Cesar Kadir~Torrico Villanueva, Reese Kneeland, Tong Chen, Ashutosh Narang, Charan Santhirasegaran, Jonathan Xu, Thomas Naselaris, Kenneth~A Norman, et~al.
\newblock Mindeye2: Shared-subject models enable fmri-to-image with 1 hour of data.
\newblock \emph{arXiv preprint arXiv:2403.11207}, 2024.

\bibitem[Seeliger et~al.(2018)Seeliger, G{\"u}{\c{c}}l{\"u}, Ambrogioni, G{\"u}{\c{c}}l{\"u}t{\"u}rk, and Van~Gerven]{seeliger2018generative}
Katja Seeliger, Umut G{\"u}{\c{c}}l{\"u}, Luca Ambrogioni, Yagmur G{\"u}{\c{c}}l{\"u}t{\"u}rk, and Marcel~AJ Van~Gerven.
\newblock Generative adversarial networks for reconstructing natural images from brain activity.
\newblock \emph{NeuroImage}, 181:\penalty0 775--785, 2018.

\bibitem[Serences and Boynton(2007)]{serences2007feature}
John~T Serences and Geoffrey~M Boynton.
\newblock Feature-based attentional modulations in the absence of direct visual stimulation.
\newblock \emph{Neuron}, 55\penalty0 (2):\penalty0 301--312, 2007.

\bibitem[Serences et~al.(2009)Serences, Ester, Vogel, and Awh]{serences2009stimulus}
John~T Serences, Edward~F Ester, Edward~K Vogel, and Edward Awh.
\newblock Stimulus-specific delay activity in human primary visual cortex.
\newblock \emph{Psychological science}, 20\penalty0 (2):\penalty0 207--214, 2009.

\bibitem[Shen et~al.(2019)Shen, Dwivedi, Majima, Horikawa, and Kamitani]{shen2019end}
Guohua Shen, Kshitij Dwivedi, Kei Majima, Tomoyasu Horikawa, and Yukiyasu Kamitani.
\newblock End-to-end deep image reconstruction from human brain activity.
\newblock \emph{Frontiers in computational neuroscience}, 13:\penalty0 432276, 2019.

\bibitem[Sprague and Serences(2013)]{sprague2013attention}
Thomas~C Sprague and John~T Serences.
\newblock Attention modulates spatial priority maps in the human occipital, parietal and frontal cortices.
\newblock \emph{Nature neuroscience}, 16\penalty0 (12):\penalty0 1879--1887, 2013.

\bibitem[Sprague et~al.(2014)Sprague, Ester, and Serences]{sprague2014reconstructions}
Thomas~C Sprague, Edward~F Ester, and John~T Serences.
\newblock Reconstructions of information in visual spatial working memory degrade with memory load.
\newblock \emph{Current Biology}, 24\penalty0 (18):\penalty0 2174--2180, 2014.

\bibitem[Sprague et~al.(2018)Sprague, Itthipuripat, Vo, and Serences]{sprague2018dissociable}
Thomas~C Sprague, Sirawaj Itthipuripat, Vy~A Vo, and John~T Serences.
\newblock Dissociable signatures of visual salience and behavioral relevance across attentional priority maps in human cortex.
\newblock \emph{Journal of neurophysiology}, 119\penalty0 (6):\penalty0 2153--2165, 2018.

\bibitem[Szegedy et~al.(2016)Szegedy, Vanhoucke, Ioffe, Shlens, and Wojna]{szegedy2016rethinking}
Christian Szegedy, Vincent Vanhoucke, Sergey Ioffe, Jon Shlens, and Zbigniew Wojna.
\newblock Rethinking the inception architecture for computer vision.
\newblock In \emph{Proceedings of the IEEE conference on computer vision and pattern recognition}, pages 2818--2826, 2016.

\bibitem[Takagi and Nishimoto(2023)]{takagi2023high}
Yu~Takagi and Shinji Nishimoto.
\newblock High-resolution image reconstruction with latent diffusion models from human brain activity.
\newblock In \emph{Proceedings of the IEEE/CVF Conference on Computer Vision and Pattern Recognition}, pages 14453--14463, 2023.

\bibitem[Tan(2019)]{tan2019efficientnet}
Mingxing Tan.
\newblock Efficientnet: Rethinking model scaling for convolutional neural networks.
\newblock \emph{arXiv preprint arXiv:1905.11946}, 2019.

\bibitem[Thual et~al.(2022)Thual, Tran, Zemskova, Courty, Flamary, Dehaene, and Thirion]{thual2022aligning}
Alexis Thual, Quang~Huy Tran, Tatiana Zemskova, Nicolas Courty, R{\'e}mi Flamary, Stanislas Dehaene, and Bertrand Thirion.
\newblock Aligning individual brains with fused unbalanced gromov wasserstein.
\newblock \emph{Advances in neural information processing systems}, 35:\penalty0 21792--21804, 2022.

\bibitem[Thual et~al.(2023)Thual, Benchetrit, Geilert, Rapin, Makarov, Banville, and King]{thual2023aligning}
Alexis Thual, Yohann Benchetrit, Felix Geilert, J{\'e}r{\'e}my Rapin, Iurii Makarov, Hubert Banville, and Jean-R{\'e}mi King.
\newblock Aligning brain functions boosts the decoding of visual semantics in novel subjects.
\newblock \emph{arXiv preprint arXiv:2312.06467}, 2023.

\bibitem[Wandell and Winawer(2015)]{wandell2015computational}
Brian~A Wandell and Jonathan Winawer.
\newblock Computational neuroimaging and population receptive fields.
\newblock \emph{Trends in cognitive sciences}, 19\penalty0 (6):\penalty0 349--357, 2015.

\bibitem[Wang et~al.(2024)Wang, Liu, Tan, and Wang]{wang2024mindbridge}
Shizun Wang, Songhua Liu, Zhenxiong Tan, and Xinchao Wang.
\newblock Mindbridge: A cross-subject brain decoding framework.
\newblock In \emph{Proceedings of the IEEE/CVF Conference on Computer Vision and Pattern Recognition}, pages 11333--11342, 2024.

\bibitem[Xia et~al.(2024)Xia, de~Charette, Oztireli, and Xue]{xia2024dream}
Weihao Xia, Raoul de~Charette, Cengiz Oztireli, and Jing-Hao Xue.
\newblock Dream: Visual decoding from reversing human visual system.
\newblock In \emph{Proceedings of the IEEE/CVF Winter Conference on Applications of Computer Vision}, pages 8226--8235, 2024.

\bibitem[Xu et~al.(2023)Xu, Wang, Zhang, Wang, and Shi]{xu2023versatile}
Xingqian Xu, Zhangyang Wang, Gong Zhang, Kai Wang, and Humphrey Shi.
\newblock Versatile diffusion: Text, images and variations all in one diffusion model.
\newblock In \emph{Proceedings of the IEEE/CVF International Conference on Computer Vision}, pages 7754--7765, 2023.

\end{thebibliography}
\end{document}